\documentclass[11pt]{article}
\usepackage[T1]{fontenc}
\usepackage[a4paper,margin=1in]{geometry}
\usepackage{amsmath,amssymb,amsthm,mathtools}
\usepackage{enumitem}
\usepackage{hyperref}
\usepackage{microtype}

\hypersetup{
  colorlinks=true,
  linkcolor=blue,
  citecolor=blue,
  urlcolor=blue
}

\newtheorem{theorem}{Theorem}[section]
\newtheorem{proposition}[theorem]{Proposition}
\newtheorem{definition}[theorem]{Definition}
\newtheorem{remark}[theorem]{Remark}
\newtheorem{example}[theorem]{Example}

\newtheorem{lemma}[theorem]{Lemma}

\title{From Bopp Shifts to Toroidal Shadows: K-Theoretic Gap Labels in Noncommutative Quantum Mechanics}
\author{S. Hasibul Hassan Chowdhury\\
  Department of Mathematics and Physical Sciences,\\
  BRAC University,\\
  Dhaka, Bangladesh\\
  \texttt{shhchowdhury@bracu.ac.bd}}
\date{}

\begin{document}

\maketitle

\begin{abstract}
We study Bopp shifts in two-dimensional noncommutative quantum mechanics (NCQM) through a functorial lens.  A nondegenerate NCQM sector with central character \((\hbar,\vartheta,B_{\mathrm{in}})\) determines a self-adjoint infinitesimal representation of the NCQM Lie algebra \(\mathfrak g_{\mathrm{NC}}\).  A Darboux normalization of its represented phase-space operators produces a self-adjoint infinitesimal representation of the Weyl--Heisenberg Lie algebra \(\mathfrak g_{\mathrm{WH}}\) with central parameter \(\hbar\), and hence defines a Bopp-shift functor that collapses the sector data
\[
(\hbar,\vartheta,B_{\mathrm{in}})\longmapsto \hbar .
\]
In particular, a generic NCQM sector is not equivalent, as a \(\mathfrak g_{\mathrm{NC}}\)-sector, to the ordinary QM sector viewed inside \(\mathfrak g_{\mathrm{NC}}\) with central character \((\hbar,0,0)\), even though their Bopp-shift images have the same Weyl--Heisenberg parameter.  To measure what this collapse forgets, we construct a toroidal shadow functor assigning to each periodic datum \(L=(a_x,a_y)\) and each NCQM sector with central character \(\chi=(\hbar,\vartheta,B_{\mathrm{in}})\) a phase-space noncommutative four-torus \(A^4_{\Theta_{\chi,L}}\).  Its \(K_0\)-trace pairing yields sector-sensitive gap labels whose top-degree coefficient is
\[
\frac{\vartheta B_{\mathrm{in}}-\hbar}{(2\pi)^2\hbar}.
\]
This coefficient is independent of the spatial cell area \(a_xa_y\) and equals the Pfaffian \(\operatorname{Pf}(\Theta_{\chi,L})\), the area-independent top-degree generator of the \(K_0\)-trace range.  Since strong Morita equivalence preserves this trace range up to positive scaling, non-proportionality of the trace ranges obstructs Morita equivalence of the shadows; this separates equal-\(\hbar\) sectors that \(\mathcal B_T\) identifies.  In the arithmetic subfamily where the normalized deformation coefficients \(\vartheta/A\) and \(B_{\mathrm{in}}A/\hbar\) are algebraic, the trace-range scale is forced to be trivial and \(|\operatorname{Pf}|\) becomes a computable separation criterion: \(|\vartheta B_{\mathrm{in}}-\hbar|\neq|\vartheta' B'_{\mathrm{in}}-\hbar|\) already implies inequivalence.  This is the sense in which the toroidal shadow functor is a sector-sensitive refinement of the Bopp-shift functor.
\end{abstract}

\section{Introduction}

Two-dimensional noncommutative quantum mechanics (NCQM) is often formulated
through operators $\hat X, \hat Y, \hat\Pi_x, \hat\Pi_y$ satisfying
\[
[\hat X,\hat Y]=i\vartheta \hat{I},\qquad [\hat\Pi_x,\hat\Pi_y]=i\hbar B_{\mathrm{in}}\hat{I},\qquad
[\hat X,\hat\Pi_x]=[\hat Y,\hat\Pi_y]=i\hbar \hat{I},
\]
with all other commutators vanishing. These operators are densely defined on the
Hilbert space $\mathcal{H}=L^2(\mathbb{R}^2)$; $\hat{I}$ denotes the identity operator
on $\mathcal{H}$. (Here and throughout, operators carrying a hat are represented on
$\mathcal{H}$; abstract Lie algebra elements have no hat.)

A common technique is to form new linear combinations of these operators, known
as generalized Bopp shifts, so that the resulting quadruple satisfies the
canonical commutation relations. Such transformations rewrite noncommutative
Hamiltonians in terms of auxiliary canonical operators and are therefore useful
for computations. They have also sometimes been interpreted as suggesting an
equivalence between NCQM and ordinary quantum mechanics (QM)
\cite{DjemaiSmail2004, DragovichRakic2004, Gamboa2001}.

The present paper examines what this passage really does. A Bopp-shifted
quadruple satisfies the defining relations of the Weyl--Heisenberg Lie algebra
$\mathfrak{g}_{\mathrm{WH}}$. The original operators, however, do not represent
$\mathfrak{g}_{\mathrm{WH}}$: as recalled below, they represent a larger
seven-dimensional step-two nilpotent Lie algebra $\mathfrak{g}_{\mathrm{NC}}$, the
Lie algebra of a triple central extension of the abelian group of translations in
$\mathbb{R}^4$ \cite{ChowdhuryAli2013, ChowdhuryAli2014, Chowdhury2017}. Its
regular self-adjoint infinitesimal irreducible representations are labelled by the central character
$\chi=(\hbar,\vartheta,B_{\mathrm{in}})$. Under the usual regularity convention,
these infinitesimal representations integrate to unitary irreducible
representations of the corresponding simply connected Lie group.

Thus a Bopp shift is not a unitary equivalence between two representations of
the same Lie algebra; it is an invertible linear transformation of represented
operators that maps a representation of $\mathfrak{g}_{\mathrm{NC}}$ to a
representation of a different Lie algebra, $\mathfrak{g}_{\mathrm{WH}}$. This
observation provides the starting point for the present paper. We interpret the
Bopp shift as a \emph{covariant functor} between regular infinitesimal
representation categories
$\operatorname{Rep}_{\mathrm{reg}}(\mathfrak{g}_{\mathrm{NC}})$ and
$\operatorname{Rep}_{\mathrm{reg}}(\mathfrak{g}_{\mathrm{WH}})$. Here
$\operatorname{Rep}_{\mathrm{reg}}(\mathfrak g)$ denotes the category of regular
infinitesimal representations realized by essentially self-adjoint represented
operators on a common invariant dense domain. Equivalently, after passing to the
self-adjoint closures of these represented operators and multiplying by $i$, one
obtains the usual skew-adjoint infinitesimal representation that integrates to a
unitary representation of the corresponding connected simply connected nilpotent
Lie group. Irreducibility is understood in this regular sense: the integrated
unitary representation is irreducible, equivalently the von Neumann algebra
generated by the one-parameter unitary groups obtained from the self-adjoint
closures of the represented operators acts irreducibly.

More precisely, after choosing Darboux normalizations, one obtains a Bopp-shift
functor
\[
\mathcal{B}_T:\mathcal{C}^{\circ}_{\mathfrak{g}_{\mathrm{NC}}}
\longrightarrow
\operatorname{Rep}_{\mathrm{reg}}(\mathfrak{g}_{\mathrm{WH}}),
\]
defined on the full subcategory of $\mathfrak{g}_{\mathrm{NC}}$-representations
whose irreducible constituents have nondegenerate central characters. Here
``nondegenerate'' means that the corresponding commutator form has full rank,
equivalently
\[
\hbar\neq0,\qquad \hbar-B_{\mathrm{in}}\vartheta\neq0.
\]
On objects, the functor sends the sector label
\[
(\hbar,\vartheta,B_{\mathrm{in}})
\quad\longmapsto\quad
\hbar.
\]
Thus the Bopp-shift passage retains the Planck parameter but forgets the two
additional NCQM central-character components. The ordinary QM sector, viewed inside $\mathfrak{g}_{\mathrm{NC}}$, is the nondegenerate sector with central character $(\hbar,0,0)$, while a generic NCQM sector, in the
terminology of this paper, means a nondegenerate sector
$(\hbar,\vartheta,B_{\mathrm{in}})$ with $\vartheta\neq0$ and
$B_{\mathrm{in}}\neq0$. The inequivalence between these sectors is an immediate
byproduct of central-character invariance; the main point here is to describe
the corresponding loss of sector data and to construct a refinement that
remembers part of it.

The Bopp-shift functor is useful for calculations because the target
Weyl--Heisenberg representation is, by the Stone--von Neumann theorem, unitarily
equivalent to the ordinary Schr\"odinger representation with parameter $\hbar$.
This is precisely why Bopp shifts can create the impression of equivalence:
one is comparing the auxiliary Weyl--Heisenberg representations, not the original
$\mathfrak{g}_{\mathrm{NC}}$-representations. For a thorough discussion of
Weyl--Heisenberg representation theory and the Stone--von Neumann theorem, see
Folland \cite{Folland1989}.

To retain part of the data discarded by $\mathcal{B}_T$, we construct a
\emph{toroidal shadow functor} $\mathcal{T}_L^\nabla$ associated with a spatial
periodicity datum $L=(a_x,a_y)$. For a nondegenerate sector $\chi$, we restrict
the Weyl system generated by $\hat X,\hat Y,\hat\Pi_x,\hat\Pi_y$ to a rank-four
phase-space lattice determined by $L$. After an intrinsic normalization, the
resulting unitaries generate a smooth noncommutative $4$-torus
$A^4_{\Theta_{\chi,L}}$. The dimensionless deformation matrix $\Theta_{\chi,L}$
depends on $\vartheta$ and $B_{\mathrm{in}}$ in a way that survives the
Bopp-shift collapse.

The target of $\mathcal{T}_L^\nabla$ is a Morita category of smooth
noncommutative $4$-tori arising from NCQM sectors: its objects are the tori
$A^4_{\Theta_{\chi,L}}$ and its morphisms are Morita equivalence bimodules. The
superscript $\nabla$ records the geometric use of these tori as receptacles for
finitely generated projective modules and Grassmann connections, which enter the
gap-label and Chern-character discussion below. This setting is directly motivated by condensed-matter
physics, where noncommutative tori appear naturally in the theory of the quantum
Hall effect and topological insulators \cite{Bellissard1994, ProdanSchulzBaldes2016}.
For any projection $p$ (for instance, a Fermi projection of a gapped
Hamiltonian), the trace pairing on $K_0(A^4_{\Theta_{\chi,L}})$ yields a
\emph{gap-label invariant} $\mathfrak{G}_{\chi,L}(p)$. The invariant takes the
form
\[
\mathfrak{G}_{\chi,L}(p) = N - \frac{\vartheta}{2\pi a_x a_y}C_{12}
- \frac{1}{2\pi}C_{13} - \frac{1}{2\pi}C_{24}
- \frac{B_{\mathrm{in}}a_x a_y}{2\pi\hbar}C_{34}
+ \frac{\vartheta B_{\mathrm{in}} - \hbar}{(2\pi)^2\hbar}Q,
\]
where $N, C_{ij}, Q$ are components of the $K_0$-class. Notably, the coefficient
of the top-degree component $Q$ is independent of the spatial cell area
$a_xa_y$ and is controlled by the intrinsic nondegeneracy combination
$\hbar-B_{\mathrm{in}}\vartheta$. Lower-degree coefficients, such as those of
$C_{12}$ and $C_{34}$, do depend on the cell area, reflecting how the periodic
lattice samples spatial noncommutativity and the magnetic or momentum
commutator. Explicit constructions of projective modules over noncommutative
tori, and their relevance to NCQM-generated noncommutative $4$-tori, are
provided by Rieffel's general theory \cite{Rieffel1988} and by the NCQM
construction in \cite{Chowdhury2017}.

The component $Q$ is the degree-four component of the $K_0$-class, the algebraic
analogue of the second Chern number in four-dimensional quantum Hall systems
\cite{ZhangHu2001}. Because the four toroidal directions are compactified
position- and momentum-like NCQM directions rather than an ordinary Brillouin
torus, we treat $Q$ here strictly as a $K$-theoretic gap-label component; a
current-response or pumping interpretation is not claimed
(Section~\ref{sec:examples-interpretation}).

These constructions yield the paper's main results, which make precise the sense
in which the toroidal shadow remembers what the Bopp shift forgets. The trace
pairing endows each shadow with a trace range
$\Gamma_{\chi,L}=\tau_*\bigl(K_0(A^4_{\Theta_{\chi,L}})\bigr)\subset\mathbb R$,
which strong Morita equivalence preserves only up to a positive scalar. Hence
non-proportionality of two trace ranges obstructs Morita equivalence of the
corresponding shadows, and already separates equal-$\hbar$ sectors that
$\mathcal B_T$ identifies (Theorem~\ref{thm:trace-range}). This obstruction
becomes fully explicit in the arithmetic subfamily where the dimensionless
coefficients $\vartheta/a_xa_y$ and $B_{\mathrm{in}}a_xa_y/\hbar$ are algebraic:
there the transcendence of $1/2\pi$ forces the Morita scaling factor to be
trivial (Lemma~\ref{lem:lambda-one}), and the area-independent Pfaffian
$\operatorname{Pf}(\Theta_{\chi,L})=(\vartheta B_{\mathrm{in}}-\hbar)/((2\pi)^2\hbar)$
becomes a \emph{computable} separation criterion: the shadows are not Morita
equivalent whenever
\[
|\vartheta B_{\mathrm{in}}-\hbar|\neq|\vartheta' B'_{\mathrm{in}}-\hbar|
\]
(Theorem~\ref{thm:pfaffian-separation}), with the exact proportionality condition
given in Proposition~\ref{prop:exact-morita}. Thus the single intrinsic
combination $\hbar-B_{\mathrm{in}}\vartheta$ that governs nondegeneracy and fixes
the degree-four gap-label coefficient also controls the Morita obstruction --- a
computable invariant, detectable in the $K$-theory of the toroidal shadow, that
survives the Bopp-shift collapse.

The paper is organised as follows. Section~\ref{sec:sectors-bopp} fixes the
representation-theoretic setting and distinguishes Bopp shifts from unitary
equivalence. Section~\ref{sec:bopp-functor} records the Bopp-shift collapse.
Section~\ref{sec:toroidal-functor} constructs the toroidal shadow, and
Section~\ref{sec:k0-q} derives the gap-label formula and proves the separation
results (Theorems~\ref{thm:trace-range} and~\ref{thm:pfaffian-separation}). Section~\ref{sec:examples-interpretation}
gives examples and discusses the degree-four component and its physical
interpretation.

\section{NCQM sectors, central characters, and Bopp shifts}\label{sec:sectors-bopp}

This section fixes the representation-theoretic setting: the Lie algebra
$\mathfrak{g}_{\mathrm{NC}}$, its central characters, the nondegeneracy
condition, and the distinction between linear transformations of represented
operators (Bopp shifts, Darboux normalizations) and unitary equivalence. The
central character $(\hbar,\vartheta,B_{\mathrm{in}})$ is a unitary invariant of
a sector by Schur's lemma; it is the datum the Bopp-shift functor will forget.

Throughout, the representations of $\mathfrak{g}_{\mathrm{NC}}$ are the regular
infinitesimal representations described in the Introduction: the noncentral
generators act by essentially self-adjoint operators on a common invariant dense
domain in the Hilbert space $\mathcal{H}=L^2(\mathbb{R}^2)$, and the associated
skew-adjoint infinitesimal representation integrates to a unitary representation
of the connected simply connected nilpotent Lie group $G_{\mathrm{NC}}$,
irreducibility being understood in that integrated sense. The identity operator
on $\mathcal{H}$ is denoted $\hat{I}$; abstract Lie algebra elements carry no
hat, while their represented self-adjoint counterparts do.

Let $\mathfrak{g}_{\mathrm{NC}}$ be the seven-dimensional step-two nilpotent
Lie algebra generated by four noncentral elements
\[
X,\;Y,\;\Pi_x,\;\Pi_y
\]
and three central elements
\[
Z_\hbar,\;Z_\vartheta,\;Z_B.
\]
The nonzero brackets in $\mathfrak{g}_{\mathrm{NC}}$ are
\[
[X,\Pi_x]=Z_\hbar,\qquad [Y,\Pi_y]=Z_\hbar,
\]
\[
[X,Y]=Z_\vartheta,\qquad [\Pi_x,\Pi_y]=Z_B.
\]

If $\pi$ is an irreducible regular self-adjoint infinitesimal representation of $\mathfrak{g}_{\mathrm{NC}}$, then the center acts by scalars:
\begin{equation}\label{eq:center-action}
\pi(Z_\hbar)=i\hbar \hat{I},\qquad
\pi(Z_\vartheta)=i\vartheta \hat{I},\qquad
\pi(Z_B)=i\hbar B_{\mathrm{in}}\hat{I}.
\end{equation}
The triple
\[
\chi=(\hbar,\vartheta,B_{\mathrm{in}})
\]
is the central character of the sector. The represented noncentral
operators (denoted $\hat X,\hat Y,\hat\Pi_x,\hat\Pi_y$) satisfy
\[
[\hat X,\hat Y]=i\vartheta \hat{I},\qquad
[\hat\Pi_x,\hat\Pi_y]=i\hbar B_{\mathrm{in}}\hat{I},
\]
\[
[\hat X,\hat\Pi_x]=[\hat Y,\hat\Pi_y]=i\hbar \hat{I},
\]
with the remaining mixed commutators equal to zero.

Let
\[
V=\operatorname{span}\{\hat X,\hat Y,\hat\Pi_x,\hat\Pi_y\}.
\]
The central character defines the skew-symmetric form
\[
\Omega_\chi(U,W)=\chi([U,W])
\]
on $V$. In the ordered basis
\[
(\hat X,\hat Y,\hat\Pi_x,\hat\Pi_y),
\]
its matrix is
\begin{equation}\label{eq:sigma-chi}
\Sigma_\chi=
\begin{bmatrix}
0 & \vartheta & \hbar & 0\\
-\vartheta & 0 & 0 & \hbar\\
-\hbar & 0 & 0 & \hbar B_{\mathrm{in}}\\
0 & -\hbar & -\hbar B_{\mathrm{in}} & 0
\end{bmatrix}.
\end{equation}
A direct calculation gives
\[
\det \Sigma_\chi
=
\hbar^2(\hbar-B_{\mathrm{in}}\vartheta)^2.
\]

\begin{definition}[Nondegenerate central character]\label{def:nondegenerate}
A central character $\chi = (\hbar,\vartheta,B_{\mathrm{in}})$ is called
\emph{nondegenerate} if it satisfies
\[
\hbar \neq 0 \qquad\text{and}\qquad \hbar - B_{\mathrm{in}}\vartheta \neq 0.
\]
A $\mathfrak{g}_{\mathrm{NC}}$-sector with a nondegenerate central character is called a
\emph{nondegenerate sector}.
\end{definition}

Ordinary two-dimensional quantum mechanics (QM) appears inside the regular
self-adjoint representation theory of $\mathfrak{g}_{\mathrm{NC}}$ as the Weyl--Heisenberg sector
with central character $(\hbar,0,0)$, where the additional central directions $Z_\vartheta$ and
$Z_B$ act trivially. This sector is nondegenerate because
$\hbar\neq0$ and $\hbar - 0\cdot0 = \hbar \neq 0$.

A \emph{nondegenerate NCQM sector} is any $\mathfrak{g}_{\mathrm{NC}}$-sector of the form
$(\hbar,\vartheta,B_{\mathrm{in}})$ whose central character is nondegenerate
(Definition~\ref{def:nondegenerate}). In this paper, by a \emph{generic NCQM
sector} we mean a nondegenerate NCQM sector with
$\vartheta\neq0$ and $B_{\mathrm{in}}\neq0$. Since the scalar action of the
center is a unitary invariant of the sector, such a generic sector is
\emph{not} unitarily equivalent, as a $\mathfrak{g}_{\mathrm{NC}}$-representation,
to the ordinary QM sector $(\hbar,0,0)$. The issue addressed below is different: Bopp shifts and Darboux
normalizations can put the noncentral commutator matrix into canonical form,
and this can create the appearance of equivalence unless the full central
character is kept in view.

\subsection{Generalized Bopp shifts, Darboux normalization, and unitary equivalence}

We now separate two notions that the rest of the paper keeps carefully apart:
unitary changes of representation and linear transformations of represented
operators.

\begin{definition}[Unitary change of representation]
A unitary change of representation is implemented by a unitary operator $U$.
For represented operators one obtains
\[
\hat A\longmapsto U\hat A U^{-1}.
\]
In particular, for a represented quadruple
\[
\hat Z=(\hat X,\hat Y,\hat\Pi_x,\hat\Pi_y)^T,
\]
a unitary change gives
\[
\hat Z\longmapsto U\hat Z U^{-1}
=(U\hat X U^{-1},U\hat Y U^{-1},U\hat\Pi_x U^{-1},U\hat\Pi_y U^{-1})^T.
\]
Such a transformation preserves all commutators and the scalar action of the
center.
\end{definition}

\begin{definition}[Linear transformation of represented operators]
Let
\[
\hat Z=(\hat X,\hat Y,\hat\Pi_x,\hat\Pi_y)^T
\]
be a quadruple of represented operators on a common invariant dense domain. A
linear transformation of represented operators is a new quadruple
\[
\hat Z' = M\hat Z,\qquad M\in GL(4,\mathbb R).
\]
Equivalently, $\hat Z'_j = \sum_k M_{jk} \hat Z_k$. Such a transformation is a unitary
change of representation only if there exists a unitary operator $U$ such that
$\hat Z'_j = U \hat Z_j U^{-1}$ for all $j$; this is not automatic.
\end{definition}

A generalized Bopp shift is an invertible linear transformation of this kind,
usually used to express noncommuting phase-space operators in terms of
auxiliary canonical operators, or conversely. Concrete NCQM realizations,
including the two-parameter families used in the group-theoretic formulation
of NCQM (see Example~\ref{ex:rs-family}), should be understood in this sense:
they provide linear realizations of the noncentral operators in a fixed
representation sector. A Bopp shift is therefore not a unitary equivalence in
$\operatorname{Rep}_{\mathrm{reg}}(\mathfrak{g}_{\mathrm{NC}})$ nor in
$\operatorname{Rep}_{\mathrm{reg}}(\mathfrak{g}_{\mathrm{WH}})$; its proper
categorical interpretation (as a functor) will be given in
Section~\ref{sec:bopp-functor}.

\begin{definition}[Canonical quadruple of represented operators]
A canonical quadruple with Planck parameter $\hbar$ is a quadruple of
represented self-adjoint operators
\[
\hat Z_{\mathrm{can}}=(\hat x,\hat y,\hat p_x,\hat p_y)^T
\]
on a common invariant dense domain satisfying
\[
[\hat x,\hat p_x]=[\hat y,\hat p_y]=i\hbar \hat{I},
\qquad
[\hat x,\hat y]=[\hat p_x,\hat p_y]=[\hat x,\hat p_y]=[\hat y,\hat p_x]=0.
\]
Its commutation matrix is
\begin{equation}\label{eq:sigma-can}
\Sigma_\hbar^{\mathrm{can}} =
\begin{bmatrix}
0&0&\hbar&0\\
0&0&0&\hbar\\
-\hbar&0&0&0\\
0&-\hbar&0&0
\end{bmatrix}.
\end{equation}
\end{definition}

\begin{definition}[Generalized Bopp shift and Darboux normalization]
Let
\[
\hat Z = (\hat X,\hat Y,\hat\Pi_x,\hat\Pi_y)^T
\]
be a quadruple intended to realize the NCQM commutation relations associated
with a central character $(\hbar,\vartheta,B_{\mathrm{in}})$. A generalized
Bopp shift is an invertible real linear transformation
\[
\hat Z = M \hat Z_{\mathrm{can}},\qquad M\in GL(4,\mathbb R),
\]
such that the components of $\hat Z$ satisfy the prescribed NCQM commutation
relations. Equivalently,
\[
M \Sigma_\hbar^{\mathrm{can}} M^T = \Sigma_\chi,
\]
where $\Sigma_\chi$ is the matrix in \eqref{eq:sigma-chi}.
The inverse transformation,
\[
\hat Z_{\mathrm{can}} = T \hat Z,\quad T = M^{-1},
\]
is called a Darboux normalization of the NCQM quadruple. The nondegeneracy
condition (Definition~\ref{def:nondegenerate}) is exactly the condition under
which such a Darboux normalization exists.
\end{definition}

Being linear transformations of the represented operators rather than unitary
conjugations, \(M\) and \(T=M^{-1}\) merely put the noncentral commutator matrix
into canonical Darboux form; neither is a unitary equivalence of
\(\mathfrak g_{\mathrm{NC}}\)-sectors.

\begin{example}[Coordinate Bopp shift]

Assume \(B_{\mathrm{in}}=0\). Let \[ \hat Z_{\mathrm{can}} = (\hat x,\hat y,\hat p_x,\hat p_y)^T \] be a canonical quadruple of represented self-adjoint operators satisfying \[ [\hat x,\hat p_x]=[\hat y,\hat p_y]=i\hbar \hat I, \] with all other commutators equal to zero. Define a new represented quadruple \[ \hat Z = (\hat X,\hat Y,\hat\Pi_x,\hat\Pi_y)^T \] by \[ \hat X=\hat x-\frac{\vartheta}{2\hbar}\hat p_y,\qquad \hat Y=\hat y+\frac{\vartheta}{2\hbar}\hat p_x, \] \[ \hat\Pi_x=\hat p_x,\qquad \hat\Pi_y=\hat p_y. \] A direct computation gives \[ [\hat X,\hat Y]=i\vartheta\hat I, \qquad [\hat X,\hat\Pi_x]=[\hat Y,\hat\Pi_y]=i\hbar\hat I, \qquad [\hat\Pi_x,\hat\Pi_y]=0, \] with the remaining mixed commutators equal to zero. Thus the quadruple \[ (\hat X,\hat Y,\hat\Pi_x,\hat\Pi_y) \] realizes the NCQM commutation relations with central character \[ (\hbar,\vartheta,0). \] This construction is an invertible linear transformation of represented operators, not a unitary conjugation of the canonical quadruple. Indeed, a unitary conjugation would preserve the commutator \[ [\hat x,\hat y]=0. \] However, the transformed represented operators satisfy \[ [\hat X,\hat Y]=i\vartheta\hat I, \] which is nonzero when \(\vartheta\neq0\).
\end{example}

\begin{example}[The $(r,s)$-family as a full phase-space realization]\label{ex:rs-family}

The two-parameter \((r,s)\)-family used in the group-theoretic formulation of NCQM gives a family of concrete represented quadruples for one fixed nondegenerate central character. Let \[ \chi=(\hbar,\vartheta,B_{\mathrm{in}}) \] be nondegenerate, so that \[ \hbar\neq0, \qquad \hbar-B_{\mathrm{in}}\vartheta\neq0. \] For each admissible pair \((r,s)\), the represented quadruple \[ \hat Z_{r,s} = (\hat X^{r,s},\hat Y^{r,s}, \hat\Pi_x^{r,s},\hat\Pi_y^{r,s})^T \] is obtained from an auxiliary canonical quadruple \[ \hat Z_{\mathrm{can}} = (\hat x,\hat y,\hat p_x,\hat p_y)^T \] by an invertible real linear transformation \[ \hat Z_{r,s}=M_{r,s}\hat Z_{\mathrm{can}}, \qquad M_{r,s}\in GL(4,\mathbb R). \] The matrix \(M_{r,s}\) is chosen so that \[ M_{r,s}\Sigma_{\hbar}^{\mathrm{can}}M_{r,s}^T = \Sigma_\chi. \] Equivalently, the represented operators \[ \hat X^{r,s},\quad \hat Y^{r,s},\quad \hat\Pi_x^{r,s},\quad \hat\Pi_y^{r,s} \] satisfy \[ [\hat X^{r,s},\hat Y^{r,s}] = i\vartheta\hat I, \qquad [\hat\Pi_x^{r,s},\hat\Pi_y^{r,s}] = i\hbar B_{\mathrm{in}}\hat I, \] \[ [\hat X^{r,s},\hat\Pi_x^{r,s}] = [\hat Y^{r,s},\hat\Pi_y^{r,s}] = i\hbar\hat I, \] with the remaining mixed commutators equal to zero. For fixed \(\chi\), different admissible choices \((r,s)\) and \((r',s')\) give different concrete realizations of the same \(\mathfrak g_{\mathrm{NC}}\)-sector; since the central character is unchanged, the corresponding regular irreducible representations are unitarily equivalent as representations of \(\mathfrak{g}_{\mathrm{NC}}\). The inverse matrix \[ M_{r,s}^{-1} \] has a different role. It sends the represented NCQM quadruple \[ \hat Z_{r,s} = (\hat X^{r,s},\hat Y^{r,s}, \hat\Pi_x^{r,s},\hat\Pi_y^{r,s})^T \] to an auxiliary canonical quadruple \[ \hat Z_{\mathrm{can}} = M_{r,s}^{-1}\hat Z_{r,s}, \] which satisfies the ordinary Weyl--Heisenberg commutation relations with central parameter \(\hbar\) and is therefore naturally regarded as a represented quadruple for the Weyl--Heisenberg Lie algebra \(\mathfrak g_{\mathrm{WH}}\), not as one carrying the full \(\mathfrak g_{\mathrm{NC}}\)-central character \[ (\hbar,\vartheta,B_{\mathrm{in}}). \] The explicit matrices \(M_{r,s}\) are given in the group-theoretic construction of the NCQM representations \cite{Chowdhury2017,ChowdhuryChowdhuryDuha2021}.
\end{example}

In particular, by the Stone--von Neumann theorem the Weyl--Heisenberg
representation obtained from a Darboux normalization is determined up to unitary
equivalence by $\hbar$ alone, so the data $\vartheta$ and $B_{\mathrm{in}}$ are
not retained after normalization. This loss is formalized in the next section.

\section{The Bopp-shift functor and object-level collapse}\label{sec:bopp-functor}

We record the Bopp-shift passage as a functor. The point is that the loss of
$\vartheta$ and $B_{\mathrm{in}}$ occurs already on objects: the functor
replaces $(\hbar,\vartheta,B_{\mathrm{in}})$ by the Weyl--Heisenberg parameter
$\hbar$.

Let $\mathcal{C}^{\circ}_{\mathfrak{g}_{\mathrm{NC}}}$ denote the full subcategory of
$\operatorname{Rep}_{\mathrm{reg}}(\mathfrak{g}_{\mathrm{NC}})$ whose objects are finite direct sums of
irreducible regular self-adjoint infinitesimal representations with nondegenerate central characters:
\[
\hbar\neq 0,
\qquad
\hbar-B_{\mathrm{in}}\vartheta\neq 0.
\]
Morphisms are bounded $\mathfrak{g}_{\mathrm{NC}}$-intertwiners, i.e. bounded operators that preserve the relevant common invariant domains and intertwine the represented infinitesimal operators there.

For each nondegenerate central character
$\chi=(\hbar,\vartheta,B_{\mathrm{in}})$, choose a Darboux normalization
$T_\chi\in GL(4,\mathbb R)$ such that
\begin{equation}\label{eq:darboux-normalization}
T_\chi\Sigma_\chi T_\chi^T=\Sigma_\hbar^{\mathrm{can}},
\end{equation}
where
\begin{equation}\label{eq:sigma-can-bopp}
\Sigma_\hbar^{\mathrm{can}}=
\begin{bmatrix}
0&0&\hbar&0\\
0&0&0&\hbar\\
-\hbar&0&0&0\\
0&-\hbar&0&0
\end{bmatrix}.
\end{equation}
For an irreducible representation $\pi_\chi$, define
\[
\hat Z_{\pi_\chi}=
(\hat X_{\pi_\chi},\hat Y_{\pi_\chi},\hat\Pi_{x,\pi_\chi},\hat\Pi_{y,\pi_\chi})^T
\]
and
\[
\hat Z'_{\pi_\chi}=T_\chi \hat Z_{\pi_\chi}.
\]
The transformed operators satisfy the ordinary Weyl--Heisenberg commutation
relations with Planck parameter $\hbar$. Thus, at the infinitesimal level, they define an object of $\operatorname{Rep}_{\mathrm{reg}}(\mathfrak{g}_{\mathrm{WH}})$; under the regularity assumptions, their exponentials integrate to a unitary representation of the simply connected Weyl--Heisenberg group.

On each irreducible summand the Darboux-normalized operators satisfy the
Weyl--Heisenberg relations with parameter $\hbar$, and a
$\mathfrak g_{\mathrm{NC}}$-intertwiner $A$ intertwines the transformed operators
as well (they are fixed linear combinations of the originals); on finite direct
sums one proceeds summandwise. Hence
\[
\mathcal{B}_T:\mathcal{C}^{\circ}_{\mathfrak{g}_{\mathrm{NC}}}\longrightarrow \operatorname{Rep}_{\mathrm{reg}}(\mathfrak{g}_{\mathrm{WH}}),
\qquad
(\hbar,\vartheta,B_{\mathrm{in}})\longmapsto \hbar,
\]
is a covariant functor. (It depends on the choice of Darboux family
$\{T_\chi\}$ only up to natural isomorphism: two choices differ by a
canonical-symplectic $R_\chi=S_\chi T_\chi^{-1}$, metaplectically implemented by
a unitary, giving a natural transformation $\mathcal B_T\Rightarrow\mathcal B_S$.)

The essential point is that the collapse happens \emph{on objects}: the label
$(\hbar,\vartheta,B_{\mathrm{in}})$ is sent to $\hbar$ alone. Thus for
$\vartheta\neq0$ or $B_{\mathrm{in}}\neq0$ the sectors
$\pi_{\hbar,\vartheta,B_{\mathrm{in}}}$ and $\pi_{\hbar,0,0}$ are non-isomorphic
in $\mathcal C^\circ_{\mathfrak g_{\mathrm{NC}}}$ (their central characters
differ, so
$\operatorname{Hom}(\pi_{\hbar,\vartheta,B_{\mathrm{in}}},\pi_{\hbar,0,0})=0$ by
Schur's lemma), yet their $\mathcal B_T$-images coincide up to unitary
equivalence. The functor is therefore neither an equivalence nor full; but the
loss of $\vartheta,B_{\mathrm{in}}$ is already complete at the object level,
independently of these facts. This collapse motivates the toroidal shadow
construction below.

\section{The toroidal shadow functor}\label{sec:toroidal-functor}\label{sec:toroidal-shadow}

This section constructs the positive replacement for the data lost under the
Bopp-shift functor.  For a fixed spatial periodicity datum $L=(a_x,a_y)$, a
nondegenerate NCQM sector determines dimensionless compact Weyl multipliers,
hence a noncommutative $4$-torus.  We then define the target category, the
toroidal shadow functor, and the precise sense in which this functor separates
some sectors identified by the Bopp shift.

Let $\chi=(\hbar,\vartheta,B_{\mathrm{in}})$ be a nondegenerate central character.
Suppose the system is placed in a background with spatial periods
\[
L=(a_x,a_y).
\]
Define dimensionless phase-space operators
\begin{equation}\label{eq:v-def}
\hat v_1=\frac{\hat X}{a_x},
\qquad
\hat v_2=\frac{\hat Y}{a_y},
\qquad
\hat v_3=\frac{a_x\hat\Pi_x}{\hbar},
\qquad
\hat v_4=\frac{a_y\hat\Pi_y}{\hbar}.
\end{equation}
Set
\[
U_j=e^{i\hat v_j},
\qquad
j=1,2,3,4.
\]
With the convention \(U_j=e^{i\hat v_j}\), the compactification is imposed only
at the Weyl-unitary level. More precisely, for each \(j\) we use the central
\(2\pi\)-shift relation
\[
\hat v_j\sim \hat v_j+2\pi n\hat I,\qquad n\in\mathbb{Z},
\]
these shifts leaving the Weyl unitary unchanged,
\[
e^{i(\hat v_j+2\pi n\hat I)}=e^{i\hat v_j}.
\]
Transporting this convention through \eqref{eq:v-def}, and writing the case
\(n=1\), gives
\[
\hat X\sim \hat X+2\pi a_x\hat I,\qquad
\hat Y\sim \hat Y+2\pi a_y\hat I,
\]
\[
\hat\Pi_x\sim \hat\Pi_x+\frac{2\pi\hbar}{a_x}\hat I,\qquad
\hat\Pi_y\sim \hat\Pi_y+\frac{2\pi\hbar}{a_y}\hat I.
\]
This notation does not identify unbounded self-adjoint operators and does not
define a quotient of the \(*\)-algebra generated by
\(\hat X,\hat Y,\hat\Pi_x,\hat\Pi_y,\hat I\). We also do not take \(\sim\) to be
the full equivalence relation given by equality of exponentials of arbitrary
self-adjoint operators; for general \(A,B\), the equality \(e^{iA}=e^{iB}\) need
not imply \(A-B\in 2\pi\mathbb{Z}\hat I\). Only the explicit central shifts above
are intended. Thus the unbounded operators serve only as infinitesimal
scaffolding for the bounded Weyl unitaries \(U_j=e^{i\hat v_j}\), and only these
unitaries descend to the toroidal shadow algebra \(A^4_{\Theta_{\chi,L}}\).
For example, the conjugate pair \(\hat X,\hat\Pi_x\) gives
\[
[\hat v_1,\hat v_3]=[\hat X,\hat\Pi_x]/\hbar=i,
\qquad
U_1U_3=e^{-i}U_3U_1,
\]
which yields the universal mixed entry
\((\Theta_{\chi,L})_{13}=-1/(2\pi)\) in \eqref{eq:theta-entries} below.

Since the commutators are central, the Baker--Campbell--Hausdorff formula gives
\begin{equation}\label{eq:torus-relations}
U_iU_j=e^{2\pi i(\Theta_{\chi,L})_{ij}}U_jU_i.
\end{equation}
With the ordering
\[
1=\hat X,
\qquad
2=\hat Y,
\qquad
3=\hat\Pi_x,
\qquad
4=\hat\Pi_y,
\]
the nonzero entries of the skew matrix $\Theta_{\chi,L}$ are
\begin{equation}\label{eq:theta-entries}
(\Theta_{\chi,L})_{12}=-\frac{\vartheta}{2\pi a_xa_y},
\qquad
(\Theta_{\chi,L})_{13}=-\frac{1}{2\pi},
\end{equation}
\begin{equation}\label{eq:theta-entries2}
(\Theta_{\chi,L})_{24}=-\frac{1}{2\pi},
\qquad
(\Theta_{\chi,L})_{34}=-\frac{B_{\mathrm{in}}a_xa_y}{2\pi\hbar},
\end{equation}
and
\[
(\Theta_{\chi,L})_{14}=(\Theta_{\chi,L})_{23}=0.
\]
Every entry of $\Theta_{\chi,L}$ is dimensionless. The spatial periods enter
only through the physical area $A=a_xa_y$. The mixed position--momentum entries
are universal because they arise from the canonical commutators after the
intrinsic normalization \eqref{eq:v-def}.

\begin{definition}[Toroidal shadow algebra]\label{def:toroidal-shadow-algebra}
The toroidal shadow algebra associated with $(\chi,L)$ is the smooth
noncommutative $4$-torus $A^4_{\Theta_{\chi,L}}$ generated by unitaries
$U_1,U_2,U_3,U_4$ satisfying
\[
U_iU_j=e^{2\pi i(\Theta_{\chi,L})_{ij}}U_jU_i.
\]
\end{definition}

This algebra is the phase-space toroidal shadow of the $\mathfrak{g}_{\mathrm{NC}}$-sector.
It is not an invariant of $\chi$ alone; it is an invariant of the pair
$(\chi,L)$.

\subsection{The toroidal shadow functor and its target}\label{sec:morita}

For a smooth noncommutative $4$-torus $A^4_\Theta$ we write
$\delta^{\Theta}_1,\ldots,\delta^{\Theta}_4$ for the canonical derivations,
$\delta^{\Theta}_j(U_k)=2\pi i\,\delta_{jk}U_k$. The natural target for the
shadow is a category of noncommutative tori in which isomorphism is Morita
equivalence, since Morita-equivalent tori have identified $K$-theory and
gap-label data.

\begin{definition}[Target category]\label{def:target}
Fix $L=(a_x,a_y)$. Let $\operatorname{Tor}_{\mathfrak g_{\mathrm{NC}},L}$ be the
category whose objects are the smooth tori $A^4_{\Theta_{\chi,L}}$ arising from
nondegenerate $\mathfrak g_{\mathrm{NC}}$-sectors, and whose morphisms
$A^4_\Theta\to A^4_{\Theta'}$ are (isomorphism classes of)
$(A^4_\Theta,A^4_{\Theta'})$-Morita equivalence bimodules, composed by balanced
tensor product. Two objects are isomorphic in
$\operatorname{Tor}_{\mathfrak g_{\mathrm{NC}},L}$ precisely when the tori are
Morita equivalent; we write $[\Theta_{\chi,L}]_{\mathrm{Mor}}$ for the
isomorphism class. Let
$\operatorname{Add}(\operatorname{Tor}_{\mathfrak g_{\mathrm{NC}},L})$ be its
finite additive completion.
\end{definition}

Explicit finitely generated projective modules and constant-curvature
connections over the four-tori arising from irreducible
$\mathfrak g_{\mathrm{NC}}$-sectors were constructed by Rieffel's method in
\cite{Chowdhury2017,Rieffel1988}; these supply the geometric receptacle for the
$K$-theoretic gap labels of Section~\ref{sec:k0-q} and, for a gapped
Hamiltonian, the Grassmann connection of a Fermi projection provides the
curvature data entering the Chern character.

\begin{definition}[Toroidal shadow functor]\label{def:toroidal-shadow-functor}
For fixed $L=(a_x,a_y)$, define
\[
\mathcal T_L^{\nabla}:
\mathcal{C}^{\circ}_{\mathfrak{g}_{\mathrm{NC}}}
\longrightarrow
\operatorname{Add}\left(\operatorname{Tor}_{\mathfrak{g}_{\mathrm{NC}},L}\right)
\]
on an irreducible object by $\pi_\chi\mapsto A^4_{\Theta_{\chi,L}}$, on finite
direct sums summandwise, and on morphisms by sending the scalar intertwiner
matrices of $\mathcal C^\circ_{\mathfrak g_{\mathrm{NC}}}$ (all Hom-spaces
between distinct sectors vanish by Schur, and endomorphisms of an isotypic
component are scalar matrices) to the same scalar matrices on the corresponding
free modules. This is a covariant $\mathbb C$-linear functor.
\end{definition}

The content of $\mathcal T_L^{\nabla}$ is entirely at the object level, and the
next remark makes precise the qualitative contrast with $\mathcal B_T$.

\begin{remark}[Collapse versus separation]\label{rem:collapse-separation}
The two functors fail to be equivalences in opposite directions. On objects,
$\mathcal B_T$ \emph{collapses}: it sends $(\hbar,\vartheta,B_{\mathrm{in}})$ to
$\hbar$, identifying all equal-$\hbar$ sectors. By contrast
$\mathcal T_L^{\nabla}$ \emph{separates}: it sends $\chi$ to
$A^4_{\Theta_{\chi,L}}$, whose deformation matrix retains $\vartheta$ and
$B_{\mathrm{in}}$. Consequently $\mathcal T_L^{\nabla}$ cannot factor through
$\mathcal B_T$. A dual symptom is that $\mathcal T_L^{\nabla}$ maps the
\emph{zero} Hom-space between two distinct sectors to a possibly \emph{nonzero}
Morita Hom-space between their shadows; the toroidal target thus carries
relational data absent from
$\operatorname{Rep}_{\mathrm{reg}}(\mathfrak g_{\mathrm{NC}})$. Whether that
Hom-space is nonzero for a given pair --- equivalently, whether the two shadows
are Morita equivalent --- is a concrete arithmetic question.
Theorem~\ref{thm:trace-range} gives the general trace-range obstruction to such
Morita equivalence, while Theorem~\ref{thm:pfaffian-separation} gives a
computable Pfaffian criterion in the algebraic-parameter subfamily.
\end{remark}

\begin{example}[Integral phase shifts]\label{ex:integral-shift}
If $N$ is an integral skew-symmetric $4\times4$ matrix then
$e^{2\pi i(\Theta+N)_{ij}}=e^{2\pi i\Theta_{ij}}$ for all $i,j$, so
$A^4_{\Theta+N}\cong A^4_\Theta$. Thus the toroidal phase is well defined modulo
integral skew-symmetric matrices, and Morita equivalence of higher-dimensional
tori is governed more generally by the arithmetic action of
\cite{RieffelSchwarz1999,Li2004}. In particular, distinct concrete matrices
$\Theta_{\chi,L}\neq\Theta_{\chi',L}$ need not give distinct Morita classes,
which is why the separation criterion below is stated at the level of a Morita
invariant.
\end{example}

\section{\texorpdfstring{$K_0$}{K0}-gap labels and the intrinsic \texorpdfstring{$Q$}{Q}-coefficient}\label{sec:k0-q}\label{sec:k0}

This section extracts the numerical invariant from the toroidal shadow.  The
trace pairing on $K_0(A^4_{\Theta_{\chi,L}})$ gives a dimensionless gap-label
formula.  We then isolate the coefficient of the top-degree component $Q$ and
show that it is independent of the spatial cell area $a_xa_y$, even though the
full toroidal shadow still depends on the periodicity datum $L$.

For a smooth noncommutative $4$-torus,
\[
K_0(A^4_\Theta)\cong \Lambda^{\mathrm{even}}\mathbb Z^4\cong \mathbb Z^8.
\]
We write a $K_0$-class as
\[
[p]=(N;C_{12},C_{13},C_{14},C_{23},C_{24},C_{34};Q),
\]
where $N$ is the degree-zero component, $C_{ij}$ are degree-two Chern
components, and $Q$ is the degree-four component.

For a projection $p\in M_n(A^4_{\Theta_{\chi,L}})$, define the toroidal gap
label
\begin{equation}\label{eq:gap-label-def}
\mathfrak G_{\chi,L}(p)=\tau_*([p]).
\end{equation}
The trace pairing has the form
\begin{equation}\label{eq:trace-pairing-general}
\tau_*([p])
=
N+\sum_{i<j}(\Theta_{\chi,L})_{ij}C_{ij}
+\operatorname{Pf}(\Theta_{\chi,L})Q.
\end{equation}
Since
\[
(\Theta_{\chi,L})_{14}=(\Theta_{\chi,L})_{23}=0,
\]
we have
\[
\operatorname{Pf}(\Theta_{\chi,L})
=
(\Theta_{\chi,L})_{12}(\Theta_{\chi,L})_{34}
-
(\Theta_{\chi,L})_{13}(\Theta_{\chi,L})_{24}.
\]
Substituting the entries \eqref{eq:theta-entries}--\eqref{eq:theta-entries2}
gives
\begin{equation}\label{eq:gap-label-formula}
\mathfrak G_{\chi,L}(p)
=
N
-
\frac{\vartheta}{2\pi a_xa_y}C_{12}
-
\frac{1}{2\pi}C_{13}
-
\frac{1}{2\pi}C_{24}
-
\frac{B_{\mathrm{in}}a_xa_y}{2\pi\hbar}C_{34}
+
\frac{\vartheta B_{\mathrm{in}}-\hbar}{(2\pi)^2\hbar}Q.
\end{equation}
This is the sector-sensitive gap-label formula associated with the toroidal
shadow.

If $p=P_F$ is the Fermi projection of a gapped Hamiltonian in a matrix algebra
over $A^4_{\Theta_{\chi,L}}$, then $\mathfrak G_{\chi,L}(P_F)$ has the
interpretation of a normalized integrated density of states below the
corresponding spectral gap.

\subsection{The intrinsic nature of the \texorpdfstring{$Q$}{Q}-coefficient}\label{sec:q}

The full toroidal shadow depends on the spatial periodicity datum $L$. This
dependence is expected: the construction compactifies the Weyl system using a
chosen spatial lattice. However, the coefficient of the degree-four component
$Q$ is independent of the spatial cell area.

\begin{proposition}[Intrinsic top-degree coefficient]\label{prop:q-coefficient}
Let $A=a_xa_y$. For the intrinsic normalization \eqref{eq:v-def},
\[
\operatorname{Pf}(\Theta_{\chi,L})
=
\frac{\vartheta B_{\mathrm{in}}-\hbar}{(2\pi)^2\hbar}.
\]
In particular, the coefficient of $Q$ in the gap-label formula is independent
of $A=a_xa_y$.
\end{proposition}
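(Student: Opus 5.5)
The plan is a direct computation from the explicit entries \eqref{eq:theta-entries}--\eqref{eq:theta-entries2}, with one preliminary check that those entries are correct. First I will confirm the commutators of the normalized operators \eqref{eq:v-def}:
\[
[\hat v_1,\hat v_2]=\frac{i\vartheta}{a_xa_y}, \qquad [\hat v_3,\hat v_4]=\frac{i B_{\mathrm{in}}a_xa_y}{\hbar},
\]
\[
[\hat v_1,\hat v_3]=[\hat v_2,\hat v_4]=i, \qquad [\hat v_1,\hat v_4]=[\hat v_2,\hat v_3]=0.
\]
All of these are central, so Baker--Campbell--Hausdorff gives $e^{i\hat v_j}e^{i\hat v_k}=e^{-[\hat v_j,\hat v_k]}e^{i\hat v_k}e^{i\hat v_j}$. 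Writing $[\hat v_j,\hat v_k]=ic_{jk}$, comparison with \eqref{eq:torus-relations} gives $(\Theta_{\chi,L})_{jk}=-c_{jk}/(2\pi)$. This reproduces the listed entries.

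Next I will use the general Pfaffian of a $4\times4$ skew matrix,
\[
\operatorname{Pf}(\Theta)=\Theta_{12}\Theta_{34}-\Theta_{13}\Theta_{24}+\Theta_{14}\Theta_{23}.
\]
Since $\Theta_{14}=\Theta_{23}=0$, the last term drops. Substituting the entries,
\[
\Theta_{12}\Theta_{34}=\frac{\vartheta}{2\pi A}\cdot\frac{B_{\mathrm{in}}A}{2\pi\hbar}=\frac{\vartheta B_{\mathrm{in}}}{(2\pi)^2\hbar}, \qquad \Theta_{13}\Theta_{24}=\frac{1}{(2\pi)^2}.
\]
Subtracting gives $(\vartheta B_{\mathrm{in}}-\hbar)/((2\pi)^2\hbar)$. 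The factor $A$ cancels because $\vartheta$ enters divided by $A$ and $B_{\mathrm{in}}$ enters multiplied by $A$. The mixed entries carry no $A$ at all, since $a_x$ and $a_y$ cancel between $\hat v_1,\hat v_3$ and between $\hat v_2,\hat v_4$. This proves the area independence.

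As a cross-check I will compare with $\det\Sigma_\chi$. The matrix $2\pi\Theta_{\chi,L}$ equals $-D\Sigma_\chi D/\hbar$, where $D=\operatorname{diag}(1/a_x,1/a_y,a_x/\hbar,a_y/\hbar)$ and $\det D=1/\hbar^2$. Using $\operatorname{Pf}(D\Sigma D)=\det D\cdot\operatorname{Pf}(\Sigma)$, homogeneity of degree two, and $\operatorname{Pf}(\Sigma_\chi)=\hbar(\vartheta B_{\mathrm{in}}-\hbar)$, consistent with $\det\Sigma_\chi=\hbar^2(\hbar-B_{\mathrm{in}}\vartheta)^2$, this gives the same value. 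The cancellation of $A$ here is just $\det D$ being independent of $a_x,a_y$.

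The only genuine subtlety is the sign convention: getting the BCH sign and the Pfaffian sign consistent with the trace-pairing formula \eqref{eq:trace-pairing-general}. I will therefore state that the Pfaffian used there is the standard one, $\Theta_{12}\Theta_{34}-\Theta_{13}\Theta_{24}+\Theta_{14}\Theta_{23}$. The final claim about the $Q$-coefficient is then immediate from \eqref{eq:trace-pairing-general}.
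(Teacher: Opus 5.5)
Your proposal is correct and follows the paper's proof. The paper likewise substitutes the entries \eqref{eq:theta-entries}--\eqref{eq:theta-entries2} into $\Theta_{12}\Theta_{34}-\Theta_{13}\Theta_{24}$ using $\Theta_{14}=\Theta_{23}=0$, and your BCH check of the entries agrees with the paper's sign convention. There is one slip in your optional cross-check: with $D=\operatorname{diag}(1/a_x,1/a_y,a_x/\hbar,a_y/\hbar)$ one has $[\hat v_j,\hat v_k]=i(D\Sigma_\chi D)_{jk}$, so $2\pi\Theta_{\chi,L}=-D\Sigma_\chi D$ with no extra $1/\hbar$ (keeping it would introduce a spurious factor $\hbar^{-2}$ in the Pfaffian), and with that correction the cross-check does reproduce $(\vartheta B_{\mathrm{in}}-\hbar)/((2\pi)^2\hbar)$.
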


\begin{proof}
We have
\[
(\Theta_{\chi,L})_{12}
=
-\frac{\vartheta}{2\pi A},
\qquad
(\Theta_{\chi,L})_{34}
=
-\frac{B_{\mathrm{in}}A}{2\pi\hbar},
\]
and
\[
(\Theta_{\chi,L})_{13}
=
(\Theta_{\chi,L})_{24}
=
-\frac{1}{2\pi}.
\]
Therefore
\[
\operatorname{Pf}(\Theta_{\chi,L})
=
\left(-\frac{\vartheta}{2\pi A}\right)
\left(-\frac{B_{\mathrm{in}}A}{2\pi\hbar}\right)
-
\left(-\frac{1}{2\pi}\right)
\left(-\frac{1}{2\pi}\right).
\]
Thus
\[
\operatorname{Pf}(\Theta_{\chi,L})
=
\frac{\vartheta B_{\mathrm{in}}}{(2\pi)^2\hbar}
-
\frac{1}{(2\pi)^2}
=
\frac{\vartheta B_{\mathrm{in}}-\hbar}{(2\pi)^2\hbar}.
\]
\end{proof}

This result has a clear interpretation. The lower-degree $C_{12}$ and
$C_{34}$ terms depend on the spatial cell area:
\[
-\frac{\vartheta}{2\pi A}C_{12},
\qquad
-\frac{B_{\mathrm{in}}A}{2\pi\hbar}C_{34}.
\]
They measure how the spatial noncommutativity and magnetic/momentum commutator
are sampled by the chosen periodic cell. By contrast, the $Q$-coefficient is
independent of $A$. It is controlled by the intrinsic nondegeneracy combination
\[
\hbar-B_{\mathrm{in}}\vartheta.
\]
Hence the $Q$-term is not a lattice-cell-area effect. If a gapped Hamiltonian
realizes a Fermi projection with $Q(P_F)\neq 0$, then the degree-four
contribution to the gap label is governed by sector-level phase-space data
rather than by the size of the spatial periodic cell.

\subsection{A separation criterion from the trace range}\label{sec:separation-criterion}

We now make the qualitative refinement of
Remark~\ref{rem:collapse-separation} precise. For a smooth noncommutative
$4$-torus the range of the canonical trace on $K_0$,
\begin{equation}\label{eq:trace-range}
\tau_*\bigl(K_0(A^4_\Theta)\bigr)\subset\mathbb R,
\end{equation}
is the additive subgroup generated by the $2\pi$-normalized Pfaffians of the
even principal submatrices of $\Theta$: in dimension four, by $1$, the entries
$\Theta_{ij}$ $(i<j)$, and the $4\times4$ Pfaffian $\operatorname{Pf}(\Theta)$.
This is Elliott's trace-range computation, recalled in the normalization we use
by Prodan and Schulz-Baldes \cite[Sec.~5.7, Thm.~5.7.1 and Cor.~5.7.2]{ProdanSchulzBaldes2016}
(with the convention $U_iU_j=e^{2\pi i\Theta_{ij}}U_jU_i$, i.e.\ $B=2\pi\Theta$);
see also \cite{Elliott1984,RieffelSchwarz1999}. Strong Morita equivalence
preserves this range \emph{as an additive subgroup of $\mathbb R$, up to a single
positive scalar} (the dimension of the equivalence bimodule); it does not, in
general, preserve the individual named generators. The correct obstruction to
Morita equivalence is therefore non-proportionality of the two trace ranges.

Throughout this subsection fix $L=(a_x,a_y)$, write $A=a_xa_y$, and let
$\chi=(\hbar,\vartheta,B_{\mathrm{in}})$,
$\chi'=(\hbar,\vartheta',B'_{\mathrm{in}})$ be nondegenerate central characters
with the \emph{same} Planck parameter $\hbar$ (this is the family that
$\mathcal B_T$ collapses). By
\eqref{eq:theta-entries}--\eqref{eq:theta-entries2} and
Proposition~\ref{prop:q-coefficient} the trace range is, explicitly,
\begin{equation}\label{eq:trace-range-explicit}
\Gamma_{\chi,L}
:=
\tau_*\bigl(K_0(A^4_{\Theta_{\chi,L}})\bigr)
=
\Bigl\langle\,
1,\;
-\tfrac{\vartheta}{2\pi A},\;
-\tfrac1{2\pi},\;
-\tfrac{B_{\mathrm{in}}A}{2\pi\hbar},\;
\tfrac{\vartheta B_{\mathrm{in}}-\hbar}{(2\pi)^2\hbar}
\,\Bigr\rangle_{\mathbb Z},
\end{equation}
and similarly $\Gamma_{\chi',L}$ with $(\vartheta',B'_{\mathrm{in}})$.

The general obstruction, valid for arbitrary real NCQM parameters, is the
following.

\begin{theorem}[Trace-range separation]\label{thm:trace-range}
If there is no positive real $\lambda$ with
$\Gamma_{\chi',L}=\lambda\,\Gamma_{\chi,L}$, then $A^4_{\Theta_{\chi,L}}$ and
$A^4_{\Theta_{\chi',L}}$ are not strongly Morita equivalent, so
$[\Theta_{\chi,L}]_{\mathrm{Mor}}\neq[\Theta_{\chi',L}]_{\mathrm{Mor}}$ and hence
$\mathcal T_L^{\nabla}(\pi_\chi)\not\cong\mathcal T_L^{\nabla}(\pi_{\chi'})$,
whereas $\mathcal B_T(\pi_\chi)\cong\mathcal B_T(\pi_{\chi'})$. In particular such
sectors are separated by the toroidal shadow functor although identified by the
Bopp-shift functor.
\end{theorem}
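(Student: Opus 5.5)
The argument is a contrapositive built on the Morita invariance of the trace range recalled just before the statement, combined with the object-level collapse of $\mathcal B_T$ from Section~\ref{sec:bopp-functor}.

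\textbf{Step 1: a Morita equivalence forces proportional trace ranges.} Suppose, for contradiction, that $A^4_{\Theta_{\chi,L}}$ and $A^4_{\Theta_{\chi',L}}$ are strongly Morita equivalent. The smooth tori are dense, spectrally invariant subalgebras of the $C^*$-tori, so $K_0$ and the trace pairing may be computed on the $C^*$-completions. A Morita equivalence bimodule $E$ induces an isomorphism $\phi_E:K_0(A^4_{\Theta_{\chi,L}})\to K_0(A^4_{\Theta_{\chi',L}})$. The normalized traces are the unique tracial states, because the tori are simple with unique trace; this holds in the nondegenerate setting, or at least the canonical traces are the ones being compared. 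Uniqueness then gives
\[
\tau'_*\circ\phi_E=\lambda\,\tau_*,
\qquad
\lambda=\tau'_*([E])>0,
\]
where $[E]$ denotes the class of $E$ as a projective module over $A^4_{\Theta_{\chi',L}}$. Since $\phi_E$ is surjective, this yields $\Gamma_{\chi',L}=\lambda\,\Gamma_{\chi,L}$. That contradicts the hypothesis, so the two tori are not Morita equivalent.

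The main obstacle is this step. It requires that the trace transported through $E$ is a positive multiple of the canonical trace on the other side. I will justify it by citing the standard Rieffel argument: a trace on $B$ induces a trace on $\mathrm{End}_B(E)\cong A$, and that induced trace equals $\tau_B([E])\,\tau_A$ by uniqueness of the normalized trace. The uniqueness of the trace in the degenerate, non-simple case can be sidestepped, because the canonical traces are preserved anyway. I will take this from the remark preceding \eqref{eq:trace-range} and \cite{RieffelSchwarz1999}, rather than reprove it.

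\textbf{Step 2: the shadows are non-isomorphic objects.} By Definition~\ref{def:target}, isomorphism in $\operatorname{Tor}_{\mathfrak g_{\mathrm{NC}},L}$ is exactly Morita equivalence. Step~1 therefore gives $[\Theta_{\chi,L}]_{\mathrm{Mor}}\neq[\Theta_{\chi',L}]_{\mathrm{Mor}}$. The objects $\mathcal T_L^\nabla(\pi_\chi)$ and $\mathcal T_L^\nabla(\pi_{\chi'})$ are single irreducible objects of the additive completion, and an isomorphism of single objects there is just an isomorphism in the underlying category. Hence they are non-isomorphic.

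\textbf{Step 3: the Bopp-shift images coincide.} Both $\mathcal B_T(\pi_\chi)$ and $\mathcal B_T(\pi_{\chi'})$ are irreducible regular Weyl--Heisenberg representations with the same parameter $\hbar$. By Stone--von Neumann they are unitarily equivalent, which is the object-level collapse noted in Section~\ref{sec:bopp-functor}. Combining Steps~2 and~3 gives the separation claim.
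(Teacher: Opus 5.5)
Your three steps follow the paper's proof: a strong Morita equivalence gives a $K_0$-isomorphism that rescales the trace pairing by the positive bimodule dimension, isomorphism in $\operatorname{Tor}_{\mathfrak g_{\mathrm{NC}},L}$ is Morita equivalence, and the Bopp images are identified by Stone--von Neumann. What fails is your justification for the rescaling in Step~1.

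You claim the shadows are simple with unique trace ``in the nondegenerate setting''. Nondegeneracy of $\chi$ only makes $\Theta_{\chi,L}$ invertible as a real matrix. Simplicity of $A^4_\Theta$ (equivalently, uniqueness of its trace) requires that no nonzero $n\in\mathbb Z^4$ satisfy $\Theta n\in\mathbb Z^4$. That condition is insensitive to integral shifts of $\Theta$, whereas real invertibility is not.

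For example, take $\vartheta/A=1$ and $B_{\mathrm{in}}A/\hbar=1+2\pi$. Then $\hbar-B_{\mathrm{in}}\vartheta=-2\pi\hbar\neq0$, yet $\Theta_{14}=0$, $\Theta_{21}+\Theta_{24}=0$ and $\Theta_{31}+\Theta_{34}=-1$. So $U_1U_4$ is central, and this shadow has a whole simplex of tracial states.

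Your fallback, that ``the canonical traces are preserved anyway'', is also false for such tori. Let $\beta$ be the automorphism fixing $U_2,U_3,U_4$ and sending $U_1\mapsto g(U_1U_4)U_1$, where $z\mapsto g(z)z$ is a homeomorphism of $\mathbb T$ that does not preserve Haar measure. It is a Morita self-equivalence that carries $\tau$ to $\tau\circ\beta\neq\tau$.

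Uniqueness does hold in the algebraic subfamily of Lemma~\ref{lem:lambda-one}. There $\Theta n=t\,v$ with $v\in K^4$, and transcendence of $t$ forces $v=0$. But Theorem~\ref{thm:trace-range} is stated for arbitrary real parameters.

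What the argument actually needs is weaker and true for every noncommutative torus: all tracial states induce the same homomorphism on $K_0$. To see this, let $\alpha_s(U_j)=e^{2\pi i s_j}U_j$, $s\in\mathbb T^4$, be the gauge action, and let $\omega$ be any tracial state.
\begin{itemize}
\item Since $s\mapsto\alpha_s(p)$ is a norm-continuous path of projections through $p$, the value $(\operatorname{Tr}\otimes\omega)(\alpha_s(p))$ does not depend on $s$.
\item On the other hand, $\int_{\mathbb T^4}\omega\circ\alpha_s\,ds=\tau$. Indeed, with $U^m=U_1^{m_1}U_2^{m_2}U_3^{m_3}U_4^{m_4}$, we have $\omega(\alpha_s(U^m))=e^{2\pi i\langle m,s\rangle}\omega(U^m)$, which averages to zero for $m\neq0$.
\item Averaging the first point and using the second gives $\omega_*=\tau_*$ on $K_0$.
\end{itemize}

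Now write $E\cong p\,(A^4_{\Theta_{\chi',L}})^n$, so that $A^4_{\Theta_{\chi,L}}\cong pM_n(A^4_{\Theta_{\chi',L}})p$. The trace $\operatorname{Tr}\otimes\tau'$ restricted to this corner has total mass $\lambda=\tau'_*([E])>0$. By the above, its $K_0$-pairing is $\lambda\tau_*$, i.e.\ $\tau'_*\circ\phi_E=\lambda\tau_*$, with no simplicity hypothesis. If you use this in place of the uniqueness claim, or simply cite the scaling property as the paper does, your proof coincides with the paper's.
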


\begin{proof}
Strong Morita equivalence induces a $K_0$-isomorphism carrying the trace pairing
to the trace pairing scaled by the positive dimension of the equivalence
bimodule; hence proportionality of the trace ranges is necessary for Morita
equivalence, and its failure is an obstruction. The Bopp-shift images are
irreducible Weyl--Heisenberg representations with the same central parameter
$\hbar$, hence unitarily equivalent by the Stone--von Neumann theorem.
\end{proof}

The Pfaffian is the area-independent, top-degree generator of $\Gamma_{\chi,L}$;
in the arithmetic subfamily where the dimensionless normalized parameters are
algebraic, its absolute value becomes a computable form of this obstruction. We
make this precise. Note that $\vartheta/A$ and $B_{\mathrm{in}}A/\hbar$ are (up to
the universal factor $2\pi$) the deformation-matrix entries $\Theta_{12}$ and
$\Theta_{34}$ themselves; requiring them algebraic is the standard setting of
algebraic deformation parameters in the noncommutative-torus literature.

\begin{lemma}[Rigidity of the scale in the algebraic subfamily]\label{lem:lambda-one}
Let $K\subset\mathbb R$ be a subfield over which $t:=\tfrac1{2\pi}$ is
transcendental, and suppose
\[
\frac{\vartheta}{A},\ \frac{B_{\mathrm{in}}A}{\hbar},\
\frac{\vartheta'}{A},\ \frac{B'_{\mathrm{in}}A}{\hbar}\ \in K.
\]
Then $\Gamma_{\chi,L}$ and $\Gamma_{\chi',L}$ are contained in the graded
$K$-vector space $V_K:=K\oplus K\,t\oplus K\,t^2$, and any $\lambda>0$ with
$\Gamma_{\chi',L}=\lambda\,\Gamma_{\chi,L}$ satisfies $\lambda=1$.
\end{lemma}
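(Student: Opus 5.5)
The plan is to verify the containment directly from the explicit generators \eqref{eq:trace-range-explicit}, and then to pin down $\lambda$ by a degree argument in the polynomial ring $K[t]$. Transcendence of $t=\tfrac1{2\pi}$ over $K$ is exactly what makes that ring a faithful model of the relevant real numbers.

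\textbf{Step 1 (containment and grading).} Rewrite the five generators of $\Gamma_{\chi,L}$ from \eqref{eq:trace-range-explicit} as elements of $K[t]$:
\[
1,\qquad -\tfrac{\vartheta}{A}\,t,\qquad -t,\qquad -\tfrac{B_{\mathrm{in}}A}{\hbar}\,t,\qquad \Bigl(\tfrac{\vartheta}{A}\cdot\tfrac{B_{\mathrm{in}}A}{\hbar}-1\Bigr)t^2 .
\]
The last expression is Proposition~\ref{prop:q-coefficient}, rewritten as $\operatorname{Pf}=t^2(\vartheta B_{\mathrm{in}}/\hbar-1)$ with $\vartheta B_{\mathrm{in}}/\hbar=(\vartheta/A)(B_{\mathrm{in}}A/\hbar)\in K$. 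All coefficients lie in $K$ by hypothesis, so $\Gamma_{\chi,L}\subset V_K$, and likewise $\Gamma_{\chi',L}\subset V_K$. Since $t$ is transcendental over $K$, the elements $1,t,t^2$ are $K$-linearly independent. Hence the sum $K\oplus Kt\oplus Kt^2$ is direct, and more generally evaluation at $t$ is injective on $K[t]$.

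\textbf{Step 2 (the degree-zero part of $\Gamma$ is $\mathbb Z$).} The only generator with a nonzero $K$-component is $1$. So every element of $\Gamma_{\chi,L}$ has the form $n+bt+ct^2$ with $n\in\mathbb Z$ and $b,c\in K$, and the same holds for $\Gamma_{\chi',L}$.

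\textbf{Step 3 (forcing $\lambda=1$).} Suppose $\Gamma_{\chi',L}=\lambda\Gamma_{\chi,L}$ with $\lambda>0$.
\begin{itemize}
\item Since $1\in\Gamma_{\chi,L}$, we get $\lambda\in\Gamma_{\chi',L}$, so $\lambda=n'+\beta t+\gamma t^2$ with $n'\in\mathbb Z$ and $\beta,\gamma\in K$.
\item Since $1\in\Gamma_{\chi',L}$, we get $\lambda^{-1}\in\Gamma_{\chi,L}$, so $\lambda^{-1}=n+bt+ct^2$ with $n\in\mathbb Z$ and $b,c\in K$.
\item Then the polynomials $P(T)=n'+\beta T+\gamma T^2$ and $R(T)=n+bT+cT^2$ in $K[T]$ satisfy $P(t)R(t)=1$. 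By transcendence this gives the identity $P(T)R(T)=1$ in $K[T]$.
\item Because $K[T]$ is an integral domain, degrees add, so both $P$ and $R$ are constants. Thus $\lambda=n'\in\mathbb Z$ and $\lambda^{-1}=n\in\mathbb Z$.
\item A positive integer whose reciprocal is an integer equals $1$, so $\lambda=1$.
\end{itemize}

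The only delicate point is Step 3. A priori $\lambda$ is an arbitrary real number and need not lie in $K$, so one cannot simply compare graded components of $\lambda\Gamma$ and $\Gamma'$. The trick is to use both memberships $\lambda\in\Gamma'$ and $\lambda^{-1}\in\Gamma$, which expresses both $\lambda$ and $\lambda^{-1}$ in $K[t]$. Transcendence then converts $\lambda\cdot\lambda^{-1}=1$ into a polynomial identity, where a degree count finishes the argument.
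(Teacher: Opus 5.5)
Your proof is correct, and it reaches $\lambda=1$ by a different mechanism from the paper's.

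The paper first uses nondegeneracy ($\operatorname{Pf}(\Theta_{\chi,L})\neq0$) to show that the $K$-span of each trace range is all of $V_K$, so that $\lambda V_K=V_K$. Writing $\lambda=a+bt+ct^2$ and requiring $\lambda t^2\in V_K$ kills $b$ and $c$, which gives $\lambda\in K$. A scalar in $K$ preserves the grading, and comparing the degree-$0$ graded parts then gives $\lambda\mathbb Z=\mathbb Z$.

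You instead combine the two memberships $\lambda\in\Gamma_{\chi',L}$ and $\lambda^{-1}\in\Gamma_{\chi,L}$ with the fact that the units of $K[T]$ are the nonzero constants. This yields $\lambda,\lambda^{-1}\in\mathbb Z$ directly.

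Your route is slightly more economical in two ways:
\begin{itemize}
\item It never needs the $K$-span of $\Gamma$ to fill $V_K$, so it never uses that the Pfaffian is nonzero.
\item It avoids formalizing the degree-$0$ graded part of $\Gamma$ as a subgroup. You only need that every element has integer constant coefficient.
\end{itemize}
The paper's route has the advantage of making explicit that $\lambda$ lies in $K$ and respects the grading. That is the viewpoint it then exploits when comparing trace ranges degree by degree, though once $\lambda=1$ is known this follows either way.

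One small notational slip: in Step 1 you mean that evaluation at $t$ is injective on the polynomial ring $K[T]$, not on $K[t]$.
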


\begin{proof}
Since $(\vartheta B_{\mathrm{in}}-\hbar)/\hbar=(\vartheta/A)(B_{\mathrm{in}}A/\hbar)-1\in K$,
all five generators of $\Gamma_{\chi,L}$ in \eqref{eq:trace-range-explicit} lie in
$V_K$, with degree-$0$ part $\langle1\rangle$, degree-$1$ part spanned by
$\vartheta/A,\,1,\,B_{\mathrm{in}}A/\hbar$ (times $t$), and degree-$2$ part
$(\vartheta B_{\mathrm{in}}-\hbar)/\hbar$ (times $t^2$); the same holds for
$\Gamma_{\chi',L}$. By nondegeneracy the degree-$2$ part is nonzero and the
degree-$0$ part is $\langle1\rangle$, so the $K$-span of each trace range is all
of $V_K$. Hence $\lambda V_K=V_K$, i.e.\ multiplication by $\lambda$ is a
$K$-linear automorphism of $V_K$. Because $t$ is transcendental over $K$, the
powers $\{t^n:n\ge0\}$ are $K$-linearly independent. Writing $\lambda=a+bt+ct^2$
with $a,b,c\in K$ (as $\lambda=\lambda\cdot1\in V_K$), the requirement $\lambda
t^2=at^2+bt^3+ct^4\in V_K$ forces the coefficients of $t^3$ and $t^4$ to vanish,
i.e.\ $b=c=0$; thus $\lambda=a\in K$. A scalar $\lambda\in K$ preserves the
grading; in degree $0$ both ranges have graded part exactly
$\langle1\rangle_{\mathbb Z}=\mathbb Z$, so $\lambda\,\mathbb Z=\mathbb Z$, whence
$\lambda=1$.
\end{proof}

Under this hypothesis proportionality is equality, and equality may be tested one
graded degree at a time. The degree-$2$ (Pfaffian) part gives at once the
following computable criterion.

\begin{theorem}[Pfaffian separation criterion, algebraic subfamily]\label{thm:pfaffian-separation}
Suppose the normalized parameters
$\vartheta/A,\ B_{\mathrm{in}}A/\hbar,\ \vartheta'/A,\ B'_{\mathrm{in}}A/\hbar$
lie in a subfield $K\subset\mathbb R$ over which $t=1/(2\pi)$ is transcendental
(for instance $K=\mathbb Q^{\mathrm{alg}}\cap\mathbb R$, where $\mathbb Q^{\mathrm{alg}}$
is the algebraic closure of $\mathbb Q$ in $\mathbb C$, so that
$\mathbb Q^{\mathrm{alg}}\cap\mathbb R$ is the field of real algebraic numbers).
If
\begin{equation}\label{eq:pf-sep}
\bigl|\operatorname{Pf}(\Theta_{\chi,L})\bigr|
\neq
\bigl|\operatorname{Pf}(\Theta_{\chi',L})\bigr|,
\qquad\text{equivalently}\qquad
|\vartheta B_{\mathrm{in}}-\hbar|\neq|\vartheta' B'_{\mathrm{in}}-\hbar|,
\end{equation}
then $A^4_{\Theta_{\chi,L}}$ and $A^4_{\Theta_{\chi',L}}$ are not strongly Morita
equivalent, so
$[\Theta_{\chi,L}]_{\mathrm{Mor}}\neq[\Theta_{\chi',L}]_{\mathrm{Mor}}$ and hence
$\mathcal T_L^{\nabla}(\pi_\chi)\not\cong\mathcal T_L^{\nabla}(\pi_{\chi'})$,
whereas $\mathcal B_T(\pi_\chi)\cong\mathcal B_T(\pi_{\chi'})$. In particular the
two sectors are separated by the toroidal shadow functor although identified by
the Bopp-shift functor.
\end{theorem}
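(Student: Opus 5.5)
The plan is to argue by contraposition through Theorem~\ref{thm:trace-range} and Lemma~\ref{lem:lambda-one}. Suppose $A^4_{\Theta_{\chi,L}}$ and $A^4_{\Theta_{\chi',L}}$ were strongly Morita equivalent. By the first half of the proof of Theorem~\ref{thm:trace-range}, there is then some $\lambda>0$ with $\Gamma_{\chi',L}=\lambda\,\Gamma_{\chi,L}$. Under the stated algebraicity hypothesis, Lemma~\ref{lem:lambda-one} forces $\lambda=1$. So the two trace ranges coincide as subgroups of $V_K=K\oplus Kt\oplus Kt^2$.

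Next we compare the degree-$2$ graded pieces. Write $\Gamma^{(2)}$ for the image of $\Gamma$ under the $K$-linear projection $V_K\to Kt^2$. This projection is well defined because $1,t,t^2$ are $K$-linearly independent, which holds since $t$ is transcendental over $K$. From \eqref{eq:trace-range-explicit}, only the Pfaffian generator has a nonzero $t^2$-component. Hence $\Gamma^{(2)}_{\chi,L}=\mathbb Z\cdot\operatorname{Pf}(\Theta_{\chi,L})$, and likewise for $\chi'$. Equality of the ranges gives equality of these cyclic groups. By nondegeneracy both generators are nonzero, so they agree up to sign: $|\operatorname{Pf}(\Theta_{\chi,L})|=|\operatorname{Pf}(\Theta_{\chi',L})|$. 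This contradicts \eqref{eq:pf-sep}.

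The stated equivalence of the two forms of \eqref{eq:pf-sep} follows directly from Proposition~\ref{prop:q-coefficient}. The common factor $(2\pi)^2\hbar$ is the same for $\chi$ and $\chi'$, since the Planck parameter is shared, so it cancels in absolute value. The remaining conclusions are then formal. Non-equivalence means $[\Theta_{\chi,L}]_{\mathrm{Mor}}\neq[\Theta_{\chi',L}]_{\mathrm{Mor}}$, which is non-isomorphism of $\mathcal T_L^\nabla(\pi_\chi)$ and $\mathcal T_L^\nabla(\pi_{\chi'})$ in the target category of Definition~\ref{def:target}. Meanwhile $\mathcal B_T(\pi_\chi)\cong\mathcal B_T(\pi_{\chi'})$ holds by Stone--von Neumann, exactly as in Theorem~\ref{thm:trace-range}.

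The main obstacle is making the graded projection argument airtight. Two points need checking. First, $\lambda=1$ must genuinely make the subgroups equal, not just their $K$-spans. Second, projecting a subgroup onto its $t^2$-coordinate must give exactly $\mathbb Z\operatorname{Pf}$. For the second point we need that no other generator contributes a $t^2$-term; this is immediate, since the entries $\Theta_{ij}$ are $K$-multiples of $t$ and $1$ has degree $0$. Granted this, equality of subgroups implies equality of their projections, and the argument closes without needing Morita invariance of individual generators.
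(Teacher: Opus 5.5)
Your proposal is correct and follows the paper's proof essentially step for step. Strong Morita equivalence gives proportional trace ranges, and Lemma~\ref{lem:lambda-one} then forces $\lambda=1$. The degree-$2$ parts $\mathbb Z\operatorname{Pf}(\Theta_{\chi,L})$ and $\mathbb Z\operatorname{Pf}(\Theta_{\chi',L})$ must therefore coincide, so the Pfaffians agree up to sign, and Stone--von Neumann gives $\mathcal B_T(\pi_\chi)\cong\mathcal B_T(\pi_{\chi'})$. Your $K$-linear projection onto $Kt^2$ (valid because every generator is homogeneous) and your remark that the shared $\hbar$ makes the two forms of \eqref{eq:pf-sep} equivalent only spell out steps the paper leaves implicit.
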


\begin{proof}
Suppose the shadows were strongly Morita equivalent. By
Lemma~\ref{lem:lambda-one} (whose hypothesis holds) the scale is $\lambda=1$, so
$\Gamma_{\chi',L}=\Gamma_{\chi,L}$ as subgroups of $V_K$, and in particular their
degree-$2$ parts coincide. The degree-$2$ part of $\Gamma_{\chi,L}$ is the cyclic
group $\langle(\vartheta B_{\mathrm{in}}-\hbar)/\hbar\rangle_{\mathbb Z}\,t^2$;
two cyclic subgroups $\langle x\rangle_{\mathbb Z}$, $\langle y\rangle_{\mathbb Z}$
of $K$ coincide iff $x=\pm y$. Hence $\vartheta B_{\mathrm{in}}-\hbar
=\pm(\vartheta' B'_{\mathrm{in}}-\hbar)$, i.e.
$|\vartheta B_{\mathrm{in}}-\hbar|=|\vartheta' B'_{\mathrm{in}}-\hbar|$,
contradicting \eqref{eq:pf-sep}. Since
$\operatorname{Pf}(\Theta_{\chi,L})=(\vartheta B_{\mathrm{in}}-\hbar)/((2\pi)^2\hbar)$
by Proposition~\ref{prop:q-coefficient}, the two forms of \eqref{eq:pf-sep} are
equivalent. Finally $\mathcal B_T(\pi_\chi)$ and $\mathcal B_T(\pi_{\chi'})$ are
irreducible Weyl--Heisenberg representations with the same central parameter
$\hbar$, hence unitarily equivalent by the Stone--von Neumann theorem.
\end{proof}

Thus, in the algebraic subfamily, the absolute value of the Pfaffian --- the
area-independent, top-degree generator of the trace range, and the coefficient of
the degree-four component $Q$ in the gap-label formula
\eqref{eq:gap-label-formula} --- is a computable Morita obstruction: whenever it
differs between two such equal-$\hbar$ sectors, the toroidal shadows are
inequivalent. The same combination $\hbar-\vartheta B_{\mathrm{in}}$ that fixes
the nondegeneracy of the sector and controls the degree-four gap label therefore
also separates Bopp-collapsed sectors. For arbitrary real parameters the general
obstruction of Theorem~\ref{thm:trace-range} applies instead.

For completeness we record, in the same algebraic subfamily, the exact condition
for the remaining case $|\operatorname{Pf}(\Theta_{\chi,L})|=|\operatorname{Pf}(\Theta_{\chi',L})|$,
where the degree-one lattice must also be compared.

\begin{proposition}[Exact proportionality criterion, algebraic subfamily]\label{prop:exact-morita}
Under the hypothesis of Theorem~\ref{thm:pfaffian-separation}, let
\[
D_1(\chi):=\Bigl\langle\,1,\ \tfrac{\vartheta}{A},\ \tfrac{B_{\mathrm{in}}A}{\hbar}\,\Bigr\rangle_{\mathbb Z}\subset K
\]
be the degree-one lattice of $\chi$. Then $\Gamma_{\chi,L}$ and $\Gamma_{\chi',L}$
are proportional (equivalently, equal) if and only if
\[
\vartheta B_{\mathrm{in}}-\hbar=\pm(\vartheta' B'_{\mathrm{in}}-\hbar)
\qquad\text{and}\qquad
D_1(\chi)=D_1(\chi').
\]
Consequently the shadows are \emph{not} Morita equivalent whenever either
condition fails.
\end{proposition}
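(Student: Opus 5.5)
The plan is to use Lemma~\ref{lem:lambda-one} to replace proportionality by equality, and then compare the two trace ranges one graded degree at a time inside $V_K=K\oplus Kt\oplus Kt^2$.

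\textbf{Step 1 (proportional means equal).} If $\Gamma_{\chi',L}=\lambda\Gamma_{\chi,L}$ with $\lambda>0$, then Lemma~\ref{lem:lambda-one} gives $\lambda=1$. So proportionality is equivalent to equality $\Gamma_{\chi,L}=\Gamma_{\chi',L}$ as subgroups of $V_K$.

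\textbf{Step 2 (equality splits by degree).} I will show that $\Gamma_{\chi,L}$ is the internal direct sum of its graded pieces. Each of the five generators in \eqref{eq:trace-range-explicit} is homogeneous:
\begin{itemize}
\item $1$ has degree $0$;
\item $-\tfrac{\vartheta}{A}t$, $-t$ and $-\tfrac{B_{\mathrm{in}}A}{\hbar}t$ have degree $1$;
\item $\tfrac{\vartheta B_{\mathrm{in}}-\hbar}{\hbar}t^2$ has degree $2$.
\end{itemize}
Since $1,t,t^2$ are $K$-linearly independent ($t$ is transcendental over $K$), the decomposition of $V_K$ is direct. Hence
\[
\Gamma_{\chi,L}=\mathbb Z\oplus D_1(\chi)\,t\oplus\Bigl\langle\tfrac{\vartheta B_{\mathrm{in}}-\hbar}{\hbar}\Bigr\rangle_{\mathbb Z}t^2 .
\]
For the middle summand I use that the $\mathbb Z$-span of the degree-one generators $-\tfrac{\vartheta}{A},-1,-\tfrac{B_{\mathrm{in}}A}{\hbar}$ equals $D_1(\chi)$, since sign changes do not alter a $\mathbb Z$-span.

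Two subgroups that are both direct sums of homogeneous pieces agree iff each graded piece agrees. This follows by applying the $K$-linear coordinate projections of $V_K$ onto the three summands.

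\textbf{Step 3 (compare degree by degree).}
\begin{itemize}
\item Degree $0$: the pieces agree trivially, both being $\mathbb Z$.
\item Degree $1$: the condition is exactly $D_1(\chi)=D_1(\chi')$, after cancelling the common factor $t$, which is harmless because $t\neq0$.
\item Degree $2$: as in the proof of Theorem~\ref{thm:pfaffian-separation}, two cyclic subgroups $\langle x\rangle,\langle y\rangle$ of $K$ coincide iff $x=\pm y$. Cancelling $t^2$ and the common $\hbar\neq0$ gives $\vartheta B_{\mathrm{in}}-\hbar=\pm(\vartheta' B'_{\mathrm{in}}-\hbar)$.
\end{itemize}
This proves the biconditional.

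\textbf{Step 4 (consequence).} The final statement follows from Theorem~\ref{thm:trace-range}: if either condition fails, the trace ranges are not proportional, so the shadows are not Morita equivalent.

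The main point needing care is Step 2: the splitting of $\Gamma_{\chi,L}$ as a direct sum of homogeneous pieces. It relies on every generator being homogeneous together with the transcendence of $t$. Without homogeneity, a subgroup could agree with another only after mixing degrees, and the degree-by-degree comparison would fail.
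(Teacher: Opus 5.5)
Your proposal is correct and follows essentially the same route as the paper. Lemma~\ref{lem:lambda-one} turns proportionality into equality, and the $K$-linear independence of $1,t,t^2$ reduces equality to three graded comparisons: degree $0$ is trivial, degree $1$ gives $D_1(\chi)=D_1(\chi')$, and degree $2$ gives the $\pm$ cyclic condition. Your explicit observation that every generator is homogeneous, so that $\Gamma_{\chi,L}=\mathbb Z\oplus D_1(\chi)\,t\oplus\langle(\vartheta B_{\mathrm{in}}-\hbar)/\hbar\rangle_{\mathbb Z}\,t^2$, spells out a step the paper's proof leaves implicit.
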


\begin{proof}
By Lemma~\ref{lem:lambda-one} proportionality is equality with $\lambda=1$, and
by the $K$-independence of $1,t,t^2$ equality holds iff it holds in each graded
degree. Degree $0$ is automatic ($\mathbb Z=\mathbb Z$). Degree $2$ is the cyclic
condition $\vartheta B_{\mathrm{in}}-\hbar=\pm(\vartheta' B'_{\mathrm{in}}-\hbar)$
as in the proof of Theorem~\ref{thm:pfaffian-separation}. Degree $1$ is the
equality of the subgroups $\langle-\vartheta/A,-1,-B_{\mathrm{in}}A/\hbar\rangle_{\mathbb Z}
=D_1(\chi)$, since the sign of each generator is immaterial for the group they
generate. Combining the three degrees gives the stated criterion.
\end{proof}

\begin{remark}[Scope]\label{rem:criterion-scope}
The general separation statement is Theorem~\ref{thm:trace-range}, valid for
arbitrary real NCQM parameters: non-proportionality of the trace ranges obstructs
Morita equivalence. Theorem~\ref{thm:pfaffian-separation} and
Proposition~\ref{prop:exact-morita} refine this to a computable, sector-level
criterion --- a one-line check on $|\vartheta B_{\mathrm{in}}-\hbar|$ --- under
the arithmetic hypothesis that the dimensionless normalized coefficients
$\vartheta/A$, $B_{\mathrm{in}}A/\hbar$ lie in a field $K$ over which $1/2\pi$ is
transcendental. This hypothesis is natural from the arithmetic viewpoint on
noncommutative tori --- $\vartheta/A$ and $B_{\mathrm{in}}A/\hbar$ are, up to the
factor $2\pi$, the deformation coefficients entering $\Theta_{12},\Theta_{34}$, and
requiring them algebraic is the standard algebraic-deformation setting, with the
rational case $\vartheta/A, B_{\mathrm{in}}A/\hbar\in\mathbb Q$ as the elementary
special case --- but it is nonetheless an additional restriction on the real NCQM
parameter space. The unrestricted statement remains
Theorem~\ref{thm:trace-range}, which uses the full trace range and is valid for
arbitrary real parameters. The rigidity $\lambda=1$
(Lemma~\ref{lem:lambda-one}) then uses only nondegeneracy and the transcendence
of $1/2\pi$ over $K$, together with the fact that the unit, the universal mixed
entry $-1/2\pi$, and the nonzero Pfaffian populate three distinct powers of
$1/2\pi$. We do not address Morita equivalence \emph{across} different
$\hbar$, nor the finer $\mathrm{SO}(4,4\,|\,\mathbb Z)$ arithmetic needed to
classify the shadows completely \cite{RieffelSchwarz1999,Li2004}; these are not
required for the separation statement, whose purpose is to distinguish sectors
that $\mathcal B_T$ identifies.
\end{remark}

\section{Examples, model classes, and physical interpretation}\label{sec:examples-interpretation}\label{sec:examples}

We record examples interpreting the gap-label formula --- the magnetic Brillouin
subalgebra, a lower-degree Powers--Rieffel projection, and a finite-range model
class for the degree-four component --- and separate what is established from
what is conditional.

\subsection{Magnetic Brillouin subalgebra}\label{subsec:magnetic-brillouin}

In the magnetic sector $(\hbar,0,B_{\mathrm{in}})$, the spatial noncommutativity
parameter $\vartheta$ vanishes, but the momentum commutator remains nontrivial:
$[\hat\Pi_x,\hat\Pi_y]=i\hbar B_{\mathrm{in}}\hat{I}$. The magnetic translation operators are
defined as
\[
T_1 = \exp\!\left(-\frac{i}{\hbar}a_x\hat\Pi_x\right),\qquad
T_2 = \exp\!\left(-\frac{i}{\hbar}a_y\hat\Pi_y\right),
\]
where $a_x,a_y$ are the spatial periods. These unitaries satisfy the commutation
relation
\begin{equation}\label{eq:mag-translation-comm}
T_1 T_2 = e^{2\pi i\alpha_B}\,T_2 T_1,\qquad
\alpha_B = -\frac{B_{\mathrm{in}}a_x a_y}{2\pi\hbar}\mod 1.
\end{equation}
The $C^*$-algebra generated by $T_1,T_2$ is the \emph{magnetic noncommutative
Brillouin torus}; it is isomorphic to the noncommutative $2$-torus with
deformation parameter $\alpha_B$.

A standard Hamiltonian in this subalgebra is the Harper Hamiltonian \cite{Harper1955}
\begin{equation}\label{eq:harper}
H_{\mathrm{Harper}} = T_1 + T_1^* + T_2 + T_2^*.
\end{equation}
It is self-adjoint and its spectrum $\sigma(H_{\mathrm{Harper}})$ is a compact
subset of $\mathbb{R}$. The spectral properties depend crucially on whether
$\alpha_B$ is rational or irrational \cite{Hofstadter1976}.

\begin{itemize}
    \item \emph{Rational flux:} For rational flux \(\alpha_B=p/q\), the model admits a finite-dimensional
Bloch--Floquet reduction over the magnetic Brillouin zone, and the spectrum
is described by finitely many bands. Spectral gaps, when present, determine
Fermi projections in the magnetic noncommutative torus.
    \item \emph{Irrational flux:} The spectrum is a Cantor set (Hofstadter butterfly)
        with a dense set of gaps.
\end{itemize}
In either case, there are intervals of energy that belong to the resolvent set of
$H_{\mathrm{Harper}}$.

A \emph{spectral gap} is an open interval $(E_-,E_+)$ with $E_-<E_+$ such that
$(E_-,E_+)\cap\sigma(H_{\mathrm{Harper}})=\varnothing$. Choose a Fermi energy
$E_F$ inside a spectral gap. Then the \emph{Fermi projection} is defined by
\begin{equation}\label{eq:fermi-projection}
P_F = \mathbf{1}_{(-\infty,E_F)}(H_{\mathrm{Harper}}),
\end{equation}
where $\mathbf{1}$ denotes the spectral projection. Because
$E_F\notin\sigma(H_{\mathrm{Harper}})$, the characteristic function
$\mathbf 1_{(-\infty,E_F)}$ agrees on $\sigma(H_{\mathrm{Harper}})$ with a
continuous function; hence, by continuous functional calculus,
$P_F\in C^*(H_{\mathrm{Harper}})\subset C^*(T_1,T_2)$, the magnetic Brillouin
torus. In particular $P_F$ is a projection, $P_F^2 = P_F = P_F^*$.

Via the inclusion $C^*(T_1,T_2)=C^*(U_3,U_4)\hookrightarrow A^4_{\Theta_{\chi,L}}$,
the projection $P_F$ first defines a class in the $K_0$ of the magnetic two-torus,
whose image under the induced map is the class
$[P_F]\in K_0(A^4_{\Theta_{\chi,L}})$. Under the identification
$K_0(A^4_{\Theta_{\chi,L}})\cong\Lambda^{\mathrm{even}}\mathbb{Z}^4$, a class carried
by the $(U_3,U_4)$-subtorus has components only in degree zero and in the
$(3,4)$-degree part; hence the only possibly nonzero degree-two component is
$C_{34}$ (associated with the $\hat\Pi_x,\hat\Pi_y$ directions), while
$C_{12}=C_{13}=C_{14}=C_{23}=C_{24}=0$ and $Q=0$. The gap-label formula
\eqref{eq:gap-label-formula} then reduces to
\begin{equation}\label{eq:harper-gap-label}
\mathfrak G_{\chi,L}(P_F) = N - \frac{B_{\mathrm{in}}a_x a_y}{2\pi\hbar}\,C_{34},
\end{equation}
where $N$ is the ordinary trace (the filling factor) and $C_{34}$ is an integer
(the Hall conductance in units of $e^2/h$). This is the familiar integer quantum
Hall effect expressed in the toroidal shadow language.

Thus the magnetic Brillouin subalgebra example shows how a spectral gap and the
corresponding Fermi projection naturally produce a $K_0$ class whose degree-two
component $C_{34}$ captures the magnetic (Hall) contribution to the integrated
density of states. The same reasoning applies to any gapped Hamiltonian that
lives in the $T_1,T_2$ subalgebra.

\subsection{Direct-sum Powers--Rieffel projection}

Let $p_{12}$ be a Powers--Rieffel projection in the $(U_1,U_2)$-subtorus and
$p_{34}$ a Powers--Rieffel projection in the $(U_3,U_4)$-subtorus. Define
\[
P_{12,34}
=
\begin{bmatrix}
p_{12} & 0\\
0 & p_{34}
\end{bmatrix}
\in M_2(A^4_{\Theta_{\chi,L}}).
\]
Then
\[
P_{12,34}^2=P_{12,34},
\qquad
P_{12,34}^*=P_{12,34}.
\]
With the standard orientation convention,
\[
C_{12}(P_{12,34})=1,
\qquad
C_{34}(P_{12,34})=1,
\]
and
\[
Q(P_{12,34})=0.
\]
Thus $P_{12,34}$ illustrates nontrivial lower-degree gap-label components, but
it does not realize the degree-four component.

\subsection{Physical interpretation of the gap-label components}\label{sec:interpretation}

The components in $\mathfrak G_{\chi,L}(p)$ should be understood as
$K$-theoretic gap-label data. A topological-phase interpretation requires
realization by Fermi projections of gapped Hamiltonians and stability under
gap-preserving perturbations.

The component $C_{34}$ is associated with the magnetic Brillouin torus and
reduces to the usual integer quantum Hall Chern number in the appropriate
setting. The Hall conductance contribution is
\[
\sigma_{xy}=\frac{e^2}{h}C_{34}.
\]
The component $C_{12}$ records a configuration-space noncommutativity
contribution. Its coefficient depends on the spatial cell area, reflecting how
the chosen periodic cell samples spatial noncommutativity. The components
$C_{13}$ and $C_{24}$ are mixed position--momentum components. They are
invisible in the purely magnetic Brillouin subalgebra and motivate the use of
the full noncommutative $4$-torus.

The component $Q$ is the degree-four component of the $K_0$-class. It is absent
from the ordinary two-dimensional magnetic Brillouin torus. Its coefficient is
independent of the spatial cell area and is governed by the intrinsic
nondegeneracy combination $\hbar-B_{\mathrm{in}}\vartheta$. By the Chern-character
identification for smooth noncommutative tori, when a smooth Fermi projection
$P_F$ represents the class, the top-degree component is represented by the
noncommutative second Chern character
\[
\operatorname{Ch}_2(P_F)
=\kappa
\sum_{\rho\in S_4}(-1)^\rho\,
\tau_N\!\left(
P_F\delta_{\rho(1)}(P_F)\delta_{\rho(2)}(P_F)
\delta_{\rho(3)}(P_F)\delta_{\rho(4)}(P_F)
\right),
\]
where $\tau_N=\operatorname{Tr}_N\otimes\tau$ and the constant $\kappa$ depends
only on the chosen normalization of the Chern character. In the commutative
limit this is the second Chern number of the occupied projection over a
four-torus. Hence $Q$ is the natural degree-four invariant corresponding to the
same Chern-character component that quantizes nonlinear Hall responses in
four-dimensional quantum Hall systems \cite{ZhangHu2001,Price2015} and in
lower-dimensional topological-pump realizations \cite{Lohse2018}.

This Chern-character identification is used here only to name the $K$-theoretic
component: since the four toroidal directions are compactified position- and
momentum-like NCQM directions rather than ordinary spatial ones, $Q$ is treated
as a gap-label component and not as a response coefficient
(Section~\ref{sec:towards-q}).

\subsection{A finite-range model class for the degree-four component (motivation)}\label{sec:towards-q}

The results of this paper --- the gap-label formula \eqref{eq:gap-label-formula},
the area-independence of Proposition~\ref{prop:q-coefficient}, and the
separation results (Theorems~\ref{thm:trace-range} and~\ref{thm:pfaffian-separation}) --- are statements
about $K_0(A^4_{\Theta_{\chi,L}})$ and its trace pairing, and do not depend on
the material of this subsection. We include the following model class only to
indicate how the degree-four component could be realized by a Hamiltonian; we
do \emph{not} prove that $Q(P_F)\neq0$ for an intrinsic $\Theta_{\chi,L}$, and
we flag the precise open point below. To interpret $Q$ through Hamiltonians one
would construct a gapped
\[
H\in M_n(A^4_{\Theta_{\chi,L}})
\]
or an appropriate self-adjoint affiliated operator, together with a Fermi
level $E_F$ in a spectral gap, such that
\[
P_F=\mathbf 1_{(-\infty,E_F)}(H)
\]
satisfies
\[
Q(P_F)\neq 0.
\]

A useful model class is provided by the finite-range Wilson--Dirac family on
the noncommutative $4$-torus. Let
\[
S_j=\frac{U_j-U_j^*}{2i},\qquad
C_j=\frac{U_j+U_j^*}{2},\qquad j=1,\ldots,4,
\]
and let $\Gamma_1,\ldots,\Gamma_5$ be self-adjoint matrices satisfying the
Clifford relations
\[
\Gamma_i\Gamma_j+\Gamma_j\Gamma_i=2\delta_{ij}I.
\]
For $m\in\mathbb R$, define
\[
H_m=
\sum_{j=1}^4 S_j\otimes\Gamma_j
+
\left[m+\sum_{j=1}^4(1-C_j)\right]\otimes\Gamma_5.
\]
This Hamiltonian is finite range in the torus generators: it is a finite
Laurent polynomial in the unitaries $U_j$ and their adjoints. Thus it is the
noncommutative-torus analogue of a lattice tight-binding Hamiltonian with
nearest-neighbour hopping.

For $\Theta=0$, the above expression reduces to the standard commutative
four-dimensional lattice Dirac model. Its Bloch Hamiltonian is
\[
h_m(k)=
\sum_{j=1}^4\sin(k_j)\Gamma_j
+
\left[m+\sum_{j=1}^4(1-\cos k_j)\right]\Gamma_5,
\qquad k\in\mathbb T^4.
\]
Since
\[
h_m(k)^2=
\sum_{j=1}^4\sin^2(k_j)
+
\left[m+\sum_{j=1}^4(1-\cos k_j)\right]^2,
\]
the gap can close only at the usual discrete mass values. For mass parameters
in the nontrivial windows, the occupied-band projection of this commutative
model has nonzero second Chern number. Equivalently, its $K_0$-class has a
nonzero degree-four component.

For $\Theta\neq 0$, the same finite Laurent polynomial in the $U_j$ and $U_j^*$
defines a Hamiltonian $H_m(\Theta)$ over $A^4_\Theta$. To compare the $\Theta=0$
and $\Theta\neq0$ models rigorously we use Rieffel's continuous-fields framework
for twisted group $C^*$-algebras \cite{RieffelContinuousFields}: along the path
$s\mapsto s\Theta$ the cocycles $\sigma_{s\Theta}$ on $\mathbb Z^4$ vary
continuously, so the noncommutative tori $\{A^4_{s\Theta}\}_{s\in[0,1]}$ form a
continuous field of $C^*$-algebras, with $\Theta$ deformed continuously to $0$;
the generators $U_j(s)$ are continuous sections, and hence the finite Laurent
polynomial $H_m(s\Theta)$ depends norm-continuously on $s$.
The Fermi gap and the associated $K_0$-class are then controlled by the
following elementary but rigorous stability statement, applied along
$s\mapsto H_m(s\Theta)$. If the Fermi gap stays open for all $s\in[0,1]$, then
$Q(P_F(\Theta))=Q(P_F(0))$, which is nonzero for masses in the nontrivial
windows.

We emphasize what is and is not established. The homotopy invariance below is
rigorous. What we do \emph{not} prove is that the Fermi gap of
$H_m(s\Theta_{\chi,L})$ stays open for all $s\in[0,1]$ at the intrinsic
deformation $\Theta_{\chi,L}$ determined by a given sector. For $\|\Theta\|$
small relative to the commutative gap this follows from norm-continuity and
stability of isolated spectral projections; for a general intrinsic
$\Theta_{\chi,L}$ it is a genuine spectral-gap problem that we leave open. Thus
the realization $Q(P_F)\neq0$ for an intrinsic NCQM torus is a conjecture, not
a theorem of this paper.

The following elementary stability statement records the precise topological
input used here.

\begin{proposition}[Gap-preserving homotopy of Fermi projections]
Let $A=A^4_{\Theta_{\chi,L}}$, and let
\[
H(t)\in M_n(A),\qquad t\in[0,T],
\]
be a norm-continuous path of self-adjoint elements. Suppose that there exists
$E_F\in\mathbb R$ such that
\[
E_F\notin\sigma(H(t))
\]
for every $t\in[0,T]$. Then
\[
P_F(t)=\mathbf 1_{(-\infty,E_F)}(H(t))
\]
is a norm-continuous path of projections in $M_n(A)$. Consequently
\[
[P_F(t)]=[P_F(0)]\quad\text{in }K_0(A)
\]
for all $t\in[0,T]$. In particular, the components $N$, $C_{ij}$, and $Q$
of the $K_0$-class remain constant along the gap-preserving path.
\end{proposition}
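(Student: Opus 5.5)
The plan is to prove norm-continuity of $t\mapsto P_F(t)$ by writing the spectral projection as a continuous function of $H(t)$, uniformly in $t$. Then the standard fact that nearby projections are Murray--von Neumann (indeed unitarily) equivalent gives constancy of the $K_0$-class.

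First, I will separate the spectrum from $E_F$ uniformly. The map $t\mapsto\sigma(H(t))$ is upper semicontinuous for norm-continuous self-adjoint paths: if $\|H(t)-H(s)\|<\varepsilon$, then $\sigma(H(t))$ lies in the $\varepsilon$-neighbourhood of $\sigma(H(s))$. Hence the function $t\mapsto\operatorname{dist}(E_F,\sigma(H(t)))$ is continuous; in fact it is $1$-Lipschitz in $\|H(t)-H(s)\|$. It is positive on the compact interval $[0,T]$, so it has a minimum $\delta>0$. The norms $\|H(t)\|$ are bounded by some $R$. Choose a continuous function $f:\mathbb R\to[0,1]$ with $f=1$ on $(-\infty,E_F-\delta/2]$ and $f=0$ on $[E_F+\delta/2,\infty)$. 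Since $f$ agrees with $\mathbf 1_{(-\infty,E_F)}$ on every $\sigma(H(t))$, continuous functional calculus in the $C^*$-algebra $M_n(A)$ (or its unitization) gives $P_F(t)=f(H(t))\in M_n(A)$. This element is a projection because $f^2=f=\bar f$ on the spectrum.

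Next, I will show that $t\mapsto f(H(t))$ is norm-continuous. Approximate $f$ uniformly on $[-R,R]$ within $\varepsilon$ by a polynomial $q$ (Weierstrass). Then
\[
\|f(H(t))-f(H(s))\|\le 2\varepsilon+\|q(H(t))-q(H(s))\|,
\]
and the last term tends to $0$ as $s\to t$, because polynomials are norm-continuous in their argument. Alternatively, one can use the Riesz formula $P_F(t)=\frac{1}{2\pi i}\oint_\gamma (z-H(t))^{-1}\,dz$ with a fixed contour $\gamma$ crossing $\mathbb R$ at $E_F$ and at $-R-1$. Resolvents depend continuously on $t$, uniformly for $z\in\gamma$, since $\gamma$ stays at distance at least $\delta$ from all the spectra. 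I regard this uniform spectral separation as the only real point of care; the remaining steps are routine.

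Finally, I will pass from continuity to $K_0$-constancy. By uniform continuity on $[0,T]$, partition the interval into steps $t_0<t_1<\dots<t_m$ with $\|P_F(t_k)-P_F(t_{k+1})\|<1$. For projections $p,q$ with $\|p-q\|<1$, the element $v=qp+(1-q)(1-p)$ is invertible, and its unitary part $u=v|v|^{-1}$ satisfies $upu^*=q$. Hence $[p]=[q]$ in $K_0(A)$, and chaining these equalities gives $[P_F(t)]=[P_F(0)]$. Since $N$, $C_{ij}$ and $Q$ are the coordinates of the class under the fixed isomorphism $K_0(A)\cong\Lambda^{\mathrm{even}}\mathbb Z^4$, they are constant along the path. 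As a consistency check, the trace pairing \eqref{eq:trace-pairing-general} is then constant as well.
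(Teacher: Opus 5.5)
Your proof is correct and follows essentially the same route as the paper: you write $P_F(t)$ as a gap-respecting function of $H(t)$, deduce norm-continuity, and conclude that a norm-continuous path of projections has constant $K_0$-class (the paper uses the Riesz contour integral, which you give as an alternative to your Weierstrass/continuous-functional-calculus argument). You also make explicit two points the paper leaves implicit: the uniform gap $\delta>0$ obtained from compactness of $[0,T]$, and the unitary $u=v|v|^{-1}$ that implements the equivalence of two projections at distance less than $1$.
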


\begin{proof}
The spectral-gap hypothesis permits a contour $\Gamma$ in the resolvent set
separating the occupied and unoccupied spectral parts of $H(t)$, locally
uniformly in $t$. By the Riesz formula,
\[
P_F(t)=\frac{1}{2\pi i}\int_\Gamma (z-H(t))^{-1}\,dz .
\]
Norm-continuity of $H(t)$ implies norm-continuity of the resolvent on
$\Gamma$, hence norm-continuity of $P_F(t)$. A norm-continuous path of
projections defines a homotopy in $K_0(A)$, so
\[
[P_F(t)]=[P_F(0)].
\]
The final assertion follows from the identification
$K_0(A^4_\Theta)\cong\Lambda^{\mathrm{even}}\mathbb Z^4$.
\end{proof}

This shows that the top-degree gap-label component is not merely formal: it is
naturally probed by finite-range Hamiltonians over the toroidal shadow algebra,
and the same homotopy principle would apply to a later Floquet or
adiabatic-pump construction defining a norm-continuous gap-preserving path over
$A^4_{\Theta_{\chi,L}}$. A response or pumping interpretation of $Q$ --- which
would require a specified adiabatic protocol, a gap-stability proof, and a
Kubo/Chern-character derivation --- is not part of this paper.

\section{Concluding remarks}\label{sec:concluding}

This paper distinguishes two operations often conflated in noncommutative
quantum mechanics --- linear transformations of represented operators (Bopp
shifts, Darboux normalizations) and unitary equivalence of
$\mathfrak g_{\mathrm{NC}}$-representations --- and turns the distinction into a
concrete separation criterion, computable in an arithmetic subfamily. The upshot is that the question
whether NCQM is ``merely'' quantum mechanics is not settled by the existence of a
Bopp shift: the shift is a functor that forgets the sector data $\vartheta$ and
$B_{\mathrm{in}}$, whereas the toroidal shadow retains a $K$-theoretic invariant
of them that is detectable and, under a mild arithmetic hypothesis, computable.

The results rigorously established here are the following. The central character
$(\hbar,\vartheta,B_{\mathrm{in}})$ is a unitary invariant of an irreducible
sector, so a Bopp shift is not a unitary equivalence; recorded functorially,
$\mathcal B_T$ collapses $(\hbar,\vartheta,B_{\mathrm{in}})\mapsto\hbar$ already
on objects (Section~\ref{sec:bopp-functor}). The toroidal shadow assigns to a
sector the noncommutative $4$-torus $A^4_{\Theta_{\chi,L}}$, whose trace pairing
gives the gap-label formula \eqref{eq:gap-label-formula}. The top-degree
coefficient equals $\operatorname{Pf}(\Theta_{\chi,L})=(\vartheta
B_{\mathrm{in}}-\hbar)/((2\pi)^2\hbar)$ and is independent of the cell area
(Proposition~\ref{prop:q-coefficient}); it is the area-independent top-degree
generator of the $K_0$-trace range. Since strong Morita equivalence preserves
the trace range up to positive scaling, non-proportionality of the trace ranges
obstructs Morita equivalence (Theorem~\ref{thm:trace-range}), separating
equal-$\hbar$ sectors that $\mathcal B_T$ identifies. In the arithmetic
subfamily where the dimensionless deformation coefficients $\vartheta/A,
B_{\mathrm{in}}A/\hbar$ are algebraic, the transcendence
of $1/2\pi$ forces the scale to be trivial (Lemma~\ref{lem:lambda-one}), and $|\operatorname{Pf}|$ becomes a
computable criterion: $|\vartheta B_{\mathrm{in}}-\hbar|\neq|\vartheta'
B'_{\mathrm{in}}-\hbar|$ already separates the sectors
(Theorem~\ref{thm:pfaffian-separation}), with the exact condition in
Proposition~\ref{prop:exact-morita}. The same combination $\hbar-\vartheta
B_{\mathrm{in}}$ that fixes the nondegeneracy of the sector thus records both the
degree-four gap label and the Pfaffian obstruction.

Two directions are left open, and are not claimed as results here. First, the
computable Pfaffian criterion is established in the algebraic-parameter
subfamily; for arbitrary real parameters the obstruction is the general (not
closed-form) non-proportionality of trace ranges, and we do not address Morita
equivalence across different $\hbar$ nor the finer
$\mathrm{SO}(4,4\,|\,\mathbb Z)$ arithmetic needed for a complete classification
\cite{RieffelSchwarz1999,Li2004}.

Second, and more substantively, there is the question of realizing the degree-four component \(Q\) dynamically through a mixed Kubo--Středa response over the toroidal shadow. Let \(j,a,b,c\) be four distinct toroidal directions, where \(j\) denotes the output direction, \(a\) the electric or adiabatically driven direction, and \((b,c)\) a complementary cocycle or twist plane. For a sufficiently regular, gap-preserving family of smooth Fermi projections \(P_F\) over the corresponding continuous field of noncommutative four-tori, the natural problem is to determine whether the variation, with respect to the \((b,c)\)-entry of the deformation matrix, of the degree-two Chern pairing in the \((j,a)\)-directions is governed by the degree-four Chern pairing and hence by \(Q(P_F)\). An adiabatic dual-action protocol may provide a realization of the electric component of this mixed response, while the independent deformation of the complementary twist entry supplies the Středa component. Establishing the precise normalization and analytic hypotheses, and proving the existence of a common Fermi gap for a concrete Hamiltonian at a prescribed intrinsic deformation, are left for future work. The finite-range Wilson--Dirac family of Section~\ref{sec:towards-q} provides a natural model class in which these questions may be investigated.

\section*{Acknowledgements}

The author is grateful to the anonymous reviewers, whose constructive and
positive feedback significantly improved the quality and rigor of this paper. He
also thanks his daughter, Nefertiti, for kindly providing him with pen and paper
during the calculations.

\end{document}